\def\ba{\begin{equation}}
\def\ea{\end{equation}}
\def\bea{\begin{eqnarray}}
\def\eea{\end{eqnarray}}
\def\ben{\begin{equation*}}
\def\een{\end{equation*}}
\def\bean{\begin{eqnarray*}}
\def\eean{\end{eqnarray*}}
\def\bma{\begin{mathletters}}
\def\ema{\end{mathletters}}
\def\bi{\begin{itemize}}
\def\ei{\end{itemize}}
\newcommand{\be}{\begin{equation}}
\newcommand{\ee}{\end{equation}}
\newcommand{\kommentar}[1]{}
\newcommand{\forget}[1]{}
\newtheorem{theorem}{Theorem}
\newtheorem{corollary}{Corollary}[theorem]
\begin{document}

\title{Shareability of Quantum Steering and its Relation with Entanglement}

%#####################################
\author{Biswajit Paul}
\email{biswajitpaul4@gmail.com}
\affiliation{Department of Mathematics, Balagarh Bijoy Krishna Mahavidyalaya, Hooghly, West Bengal, India}

\author{Kaushiki Mukherjee}
%\email{biswajitpaul4@gmail.com}
\affiliation{Department of Mathematics, Government Girls' General Degree College, Ekbalpore, Kolkata, India}

%########################################

\begin{abstract}
Steerability is a characteristic of quantum correlations lying in between entanglement and Bell nonlocality. Understanding how these steering correlations can be shared between different parties has profound applications in ensuring security of quantum communication protocols. Here we show that at most two bipartite reduced states of a three qubit state can violate the three settings CJWR linear steering inequality contrary to two settings linear steering inequality. This result explains that quantum steering correlations have limited shareability properties and can sometimes even be non-monogamous. In contrast to the two setting measurement scenario, three setting scenario turns out to be more useful to develop deeper understanding of shareability of tripartite steering correlations. Apart from distribution of steering correlations, several relations between reduced bipartite steering, different measures of bipartite entanglement of reduced states and genuine tripartite entanglement are presented here. The results enable detection of different kind of tripartite entanglement.

\end{abstract}

%\pacs{03.65.Ud, 02.50.Le, 03.67.Ac}

\maketitle

%########################################
	
\section{Introduction}

Success of a secure quantum network depends on quantum correlations distributed and shared among different parties over many sites \cite{Kim}. Different kind of quantum correlations, for instance, multipartite entanglement\cite{Hor,Guh} and multipartite nonlocality \cite{Bru} have been extensively used as a resource to perform many task in such networks. A key property of these quantum correlations used to secure those quantum networks is that they have limited shareability properties and sometimes can even be monogamous. For example, when a quantum system $A$ is entangled with another system $B$ then this entanglement puts a constraint on the amount of entanglement that can exist between one of those parties ($B$, say) and a third party, $C$. This limited shareability phenomenon is termed as monogamy. This is one of the fundamental differences between quantum entanglement and classical correlations, where all classical probability distributions can be shared \cite{Ton}. Monogamy of entanglement was first quantified by Coffman, Kundu, and Wootters (CKW) in \cite{Cof}, where it was shown that the sum of the individual pairwise entanglement between A and B and C cannot exceed the entanglement between A and the remaining parties together. Since then many research work have been done on such monogamy relations of quantum entanglement \cite{Osb,Yco,Bai,Zhu,Reg,Teh}. This characteristic of quantum entanglement has found potential applications in various quantum information processing tasks such as quantum key distribution \cite{Pal,Gis}, classification of quantum states \cite{Dur,Gio,Pra}, study of black-hole physics \cite{Llo}, and frustrated spin systems \cite{Rao}, etc. Similar to monogamy of entanglement, if any two quantum systems $A$ and $B$ are correlated in such a way that they violate Bell-CHSH inequality \cite{Cla} then neither of $A$ nor $B$ can be Bell-CHSH nonlocal with the other system $C$. In the last few years, several fundamental results on shareability of nonolocal correlations have been proven that constrain the distribution of nonlocal correlations in terms of violation of some Bell-type inequalities among the subsystems of a multipartite system\cite{Ton,Sca,Scn,Tne,Bar,Mas,See,Paw,Kur,Qin,Che,Tra,Ram} and they play a key role in the applications of quantum nonlocal correlations to cryptography\cite{Pal,Gis}. Monogamy relations have also been studied for quantum discord \cite{Ste}, indistinguishability \cite{Kar}, coherence \cite{Rad} , measurement induced nonlocality \cite{Chn} and other nonclassical correlations\cite{Chn}.\\

Despite the importance of shareability in quantum information, the knowledge of shareability for quantum steering is so far rather limited \cite{Rei,Xia,Chg}. The objective of this paper is to understand more about the shareability associated with the quantum steering. The notion of steering was introduced by Schr$\mathrm{o}$dinger in 1935  \cite {Sch} and the effect was recently formalised from foundational as well as quantum information perspective \cite{Wis,Jon}. Considering two distant observers Alice and Bob sharing an entangled state, steering captures the fact that Alice, by performing a local measurement on her subsystem, can remotely steer Bob's state. This is not possible if the shared state is only classically correlated. This kind of quantum correlation is known as steering \cite{Uol}.
It can be understood as a form of quantum nonlocality intermediate between entanglement and Bell nonlocality \cite{Qui}. Quantum steering is certified by the violation of steering inequalities. A number of steering inequalities have been designed to observe steering \cite{Red,Cav,Wal,Sco,Jlc,Kog,Cva,Zuk,Jev,Cos}.
Violation of such steering inequalities certify the presence of entanglement in one-sided device-independent way. Steerable states were shown to be beneficial for tasks involving randomness generation \cite{Law}, subchannel discrimination
\cite{Pia}, quantum information processing \cite{Bra}, and one-sided device-independent processing in quantum key distribution \cite{Ban}. However, comparatively little is known about the shareability of this type of nonlocality. By deriving shareability relations, one can understand how this special type of nonlocal correlation (steering) can be distributed over different subsystems. In this paper, by using the three settings CJWR linear steering inequality \cite{Cav,Cos}, we will derive different kind of trade-off relations that quantify the amount of bipartite steering that can be shared among the three qubit systems. In turn, these trade-off relations help us to prove that at most two of three reduced states of an arbitrary three qubit state can violate the three settings CJWR linear steering inequality contrary to two settings CJWR linear steering inequality or Bell-CHSH inequality, where at most one of the reduced states can violate those inequalities. Consequently, in general, steering correlations turn out to be non-monogamous.\\
Over the past few years it has become clear that correlation statistics of two-body subsystems can be very fruitful in inferring
the multipartite properties of a composite quantum system \cite{Bne,Wni,Tur,Tot,Kor,Kna,Las}. In this context, we have also studied how the reduced bipartite steering of a three qubit state depends on the bipartite and genuine tripartite entanglement of the three-qubit states. Interestingly, criteria for detecting different kind of entanglement of pure three qubit state are obtained based on these shareability relations. We illustrate the relevance of our results with different examples.

\section{PRELIMINARIES}\label{sec2}
In this section, we briefly discuss the concept of steering and the three settings CJWR linear steering inequality that we use in this work.
\subsection{Steering}
 Steering is usually formulated by considering a quantum information task \cite{Wis,Jon}. Suppose two spatially separated observers, say Alice and Bob share a bipartite state $\rho_{AB}$ and they can perform measurements in the sets $M_{A}$ and $M_{B}$,
respectively. In a steering test, Bob, who does trust  his own but not Alice's apparatus, wants to verify whether the shared state between them is entangled. He will be convinced that the shared state $\rho_{AB}$ is entangled  only if his system is genuinely influenced by Alice's measurement, instead of some preexisting local hidden states (LHS) which Alice may have access to. To make sure that Bob must  exclude the LHS model
 \begin{equation}\label{mon1}
P(a,b | A, B, \rho_{AB})=\sum_\lambda p_\lambda P(a|A, \lambda)P_{Q}(b|B, \rho_{\lambda}),
\end{equation}
in which $P(a,b|A,B, \rho_{AB}) = Tr(A_{a} \otimes B_{b}\, \rho_{AB})$ is the probability
of getting outcomes $a$ and $b$ when measurements $A$ and $B $ are performed on $\rho_{AB}$ by Alice and Bob respectively, $A_{a}$ and $B_{b}$ are their corresponding measurement
operators;
$ \lambda $ is the hidden variable, $\rho_{\lambda}$ is the state that Alice sends with probability $p_{\lambda}$($ \sum_{\lambda}p_{\lambda} =1 $); $P(a|A, \lambda)$ is the conditioned probability of Alice obtaining outcome $a$ under   $\lambda$ , $P_{Q}(b|B, \rho_{\lambda})$ denotes the quantum probability of outcome $b$ given by measuring $B$ on the local hidden state $\rho_{\lambda}$.
Now, if Bob determines that such correlation $P(a,b | A, B, \rho_{AB})$ cannot be explained by any LHS models, then he will be convinced that Alice can steer his state, and thus the corresponding bipartite state is entangled. In short, the bipartite state $\rho_{AB}$ is unsteerable by Alice to Bob if and only if the joint probability distributions satisfy the Eq.(\ref{mon1}) for all measurements $A$ and $B$. The assumption of such LHS model leads to certain steering inequalities, violation of which indicates the occurrence of steering.\\
The simplest way of constructing steering inequality is to find constraint for the correlations between Alice's and Bob's measurement statistics. In this work, we are interested in using such type of linear steering inequality formulated by
Cavalcanti, Jones, Wiseman, and Reid(CJWR) \cite{Cav}. They proposed the following  series of steering inequalities to check whether a bipartite state is steerable from Alice to Bob when both the parties are allowed to perform $n$ dichotomic measurements on their respective subsystems:

\begin{equation}\label{mon2}
F_{n}(\rho_{AB},\mu) = \frac{1}{\sqrt{n}}|\sum_{k=1}^{n} \langle A_{k} \otimes B_{k} \rangle | \leq 1
\end{equation}
where $A_{k} = \hat{a}_{k}\cdot \overrightarrow{\sigma}$, $B_{k}  =  \hat{b}_{k} \cdot \overrightarrow{\sigma}$,  $\overrightarrow{\sigma} = (\sigma_{1},\sigma_{2},\sigma_{3})$ is a vector composed of the Pauli matrices, $\hat{a}_{k} \in \mathbb{R}^{3}$ are unit vectors,  $\hat{b}_{k} \in \mathbb{R}^{3}$  are orthonormal vectors, $\mu =\{\hat{a}_{1},\hat{a}_{2},....\hat{a}_{n}, \hat{b}_{1},\hat{b}_{2},...,\hat{b}_{n} \}$ is the set of measurement directions,  $\langle A_{k} \otimes B_{k} \rangle = Tr(\rho_{AB} (A_{k} \otimes B_{k}))$ and $\rho_{AB} \in \mathbb{H_{A}} \otimes \mathbb{H_{B}}$  is any bipartite quantum state.\\
Here our attention is confined to the qubit case. In Hilbert-Schmidt representation any two qubit state can be expressed as,

\begin{equation}\label{mon3}
\rho_{AB} = \frac{1}{4}[\mathbf{ I} \otimes \mathbf{ I}  + \vec{a } \cdot \vec{\sigma} \otimes \mathbf{ I} + \mathbb{I} \otimes \vec{b} \cdot \vec{\sigma} + \sum_{i,j} t^{AB}_{ij}  \sigma_{i}  \otimes \sigma_{j}]
\end{equation}
$\vec{a}$, $\vec{b}$ being the local bloch vectors and $T_{AB} = [t^{AB}_{ij}]$ is the correlation matrix. The components $t^{AB}_{ij}$ are given by $t^{AB}_{ij} = Tr[\rho_{AB} {\sigma}_{i}  \otimes {\sigma}_{j}]$ and $\vec{a } ^{2}+ \vec{b }^{2} + \sum_{i,j} {(t^{AB}_{ij})} ^{2}\leq 3$.
In \cite{Luo}, Luo showed that any two-qubit state
can be reduced, by local unitary equivalence, to

\begin{equation}\label{mon4}
\rho'_{AB} = \frac{1}{4}[\mathbf{ I} \otimes \mathbf{ I}  + \vec{a' } \cdot \vec{\sigma} \otimes \mathbf{ I} + \mathbb{I} \otimes \vec{b'} \cdot \vec{\sigma} + \sum_{i} u'_{i} \, {\sigma}_{i}  \otimes {\sigma}_{i}]
\end{equation}
where the correlation matrix of $\rho'_{AB} $ is $T'_{AB} = diag(u'_{1},u'_{2},u'_{3})$.
In \cite{Cos}, for any two qubit state $\rho'_{AB}$,  the authors derived an analytical expression for the maximum value of the two settings and three settings CJWR linear steering inequality in terms of diagonal elements of the correlation matrix $T'_{AB} = diag(u'_{1},u'_{2},u'_{3})$. \\

Specifically, $\max_{\mu} F_{2}(\rho'_{AB},\mu)$ and $\max_{\mu} F_{3}(\rho'_{AB},\mu)$ have been evaluated to be the following :
\begin{equation}\label{mon5}
\max_{\mu} F_{2}(\rho'_{AB},\mu) = \sqrt{u'^{2}_{1} + u'^{2}_{2} },
\end{equation}
and
\begin{equation}\label{mon6}
\max_{\mu} F_{3}(\rho'_{AB},\mu) = \sqrt{Tr[T'^{T}_{AB} T'_{AB}] },
\end{equation}
where ${u'}_{1}^{2}$ and ${u'}_{2}^{2}$ are two largest diagonal elements of $T'^{2}_{AB}$.
Here we do consider only the three settings linear steering inequality as under two measurement settings the notion of steering and Bell-CHSH nonlocality are indistinguishable.
Since the states given in Eq.(\ref{mon3}) and Eq.(\ref{mon4}) are local unitary equivalent, we must have,
\begin{eqnarray*}
\max_{\mu} F_{3}(\rho'_{AB},\mu) = \sqrt{Tr[T'^{T}_{AB} T'_{AB}] } &&=\sqrt{Tr[T^{T}_{AB} T_{AB}] } \\
                                                                                           &&=\max_{\mu} F_{3}(\rho_{AB},\mu).
\end{eqnarray*}

Consequently, the linear inequality(\ref{mon2}) (for three measurement settings) implies that any state $\rho_{AB}$ is $F_{3}$ steerable if and only if
\begin{equation}\label{mon8}
S_{AB} = Tr[T^{T}_{AB} T_{AB}] >1.
\end{equation}
Note that this condition is just a sufficient criterion to check steerability. There exist steerable states which satisfy $S_{AB} \leq 1$.

\section{Shareability and monogamy of Steering Correlations}\label{monsec1}
Consider a scenario in which Alice, Bob, and Charlie share a three qubit state $\rho_{ABC}.$ Let $\rho_{AB}$, $\rho_{AC}$, $\rho_{BC}$ denote the three bipartite reduced states of $\rho_{ABC}$.
In general, for tripartite states, monogamy relations have the following form:
\begin{equation}\label{mon9}
Q(\rho_{AB}) + Q(\rho_{AC}) \leq Q(\rho_{A|BC})
\end{equation}
or
\begin{equation}\label{mon10}
Q(\rho_{AB}) + Q(\rho_{AC}) \leq K
\end{equation}

for some bipartite quantum measure $Q$ and positive real number $K$. Here $Q(\rho_{A|BC})$ represents the correlation between subsystems $A$ and $BC$.
Entanglement, Bell-CHSH nonlocality, and steering (via two settings linear steering inequality) are examples of such correlation measures satisfying this monogamy relation(Eq.(\ref{mon10})).
Particularly, for Bell-CHSH inequality and $F_{2}$ inequality, monogamy relation(\ref{mon10})  takes the form \cite{Ton,Tne,See,Che}
\begin{equation}\label{mon11}
Q(\rho_{AB}) + Q(\rho_{AC}) \leq 2.
\end{equation}
Thus, at most one bipartite reduced state with respect to a certain observer (say, $A$) can violate the linear steering $F_{2}$ inequality. This feature is known as ``\textit{monogamy of steering correlations}''.\\
It is a known fact that entanglement is a property of a quantum state, now correlations generated due to measurements performed on any entangled quantum state are not solely determined by the state of the system under consideration. It is also dependent on the specific setup used to determine the correlations. Consequently, in general, steerability of a quantum state  varies from one measurement scenario to another.  In this context, an obvious question arises: \textit{can addition of one more observable per party change the monogamous nature of steering?}
Affirmative answer of this query is given by the following theorem.
\begin{theorem}\label{mont1}
For any three qubit state $\rho_{ABC} \in \mathbb{H}^{A} \otimes \mathbb{H}^{B} \otimes \mathbb{H}^{C}$, at most two reduced states can violate the three settings CJWR linear steering inequality, i.e steering can be non-monogamous when each party measures three dichotomic observables.
\end{theorem}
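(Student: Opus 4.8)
The plan is to prove the sharper quantitative statement that
\begin{equation}
S_{AB}+S_{AC}+S_{BC}\le 3
\end{equation}
for every three-qubit state $\rho_{ABC}$, where for a pair $XY$ I write $S_{XY}=\trace[T_{XY}^{T}T_{XY}]$, the quantity governing $F_3$-steerability through Eq.(\ref{mon8}). Since a violation of the three-setting inequality requires $S_{XY}>1$, this single bound forbids all three reduced states from being steerable at once, leaving at most two; the result then follows.

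First I would settle the case of a pure state $|\psi\rangle_{ABC}$. Writing $\vec a,\vec b,\vec c$ for the Bloch vectors of the single-qubit marginals $\rho_A,\rho_B,\rho_C$, a direct Hilbert--Schmidt computation from the representation (\ref{mon3}) gives the purity of each two-qubit marginal, e.g.
\begin{equation}
\trace[\rho_{AB}^2]=\tfrac14\bigl(1+|\vec a|^2+|\vec b|^2+S_{AB}\bigr),
\end{equation}
while $\trace[\rho_C^2]=\tfrac12(1+|\vec c|^2)$. The crucial input is the Schmidt decomposition across the $AB|C$ cut of a pure state, which forces $\rho_{AB}$ and $\rho_C$ to share the same nonzero spectrum, hence $\trace[\rho_{AB}^2]=\trace[\rho_C^2]$. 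Combining these yields
\begin{equation}
S_{AB}=1+2|\vec c|^2-|\vec a|^2-|\vec b|^2,
\end{equation}
together with its two cyclic images for $S_{BC}$ and $S_{CA}$. Adding the three identities, every Bloch-length term cancels and one obtains $S_{AB}+S_{BC}+S_{CA}=3$ \emph{exactly} for pure states.

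To reach arbitrary mixed states I would invoke convexity. Each correlation entry $t^{XY}_{ij}=\trace[\rho_{ABC}\,\sigma_i\otimes\sigma_j\otimes I]$ (identity on the traced-out party) is an affine functional of $\rho_{ABC}$, so every $S_{XY}$, being a sum of squares of such functionals, is a convex function of the state, and so is the sum $S_{AB}+S_{AC}+S_{BC}$. A convex function on the compact convex set of density operators attains its maximum at an extreme point, i.e.\ at a pure state, where the sum equals the constant $3$; therefore $S_{AB}+S_{AC}+S_{BC}\le 3$ for all $\rho_{ABC}$, establishing that at most two reduced states can be $F_3$-steerable. To justify that two \emph{can} in fact violate simultaneously (genuine non-monogamy, unlike the $F_2$ or Bell--CHSH case), I would exhibit a pure state whose marginals satisfy $|\vec a|=0$ and $|\vec b|=|\vec c|\neq0$; the identities above then give $S_{AB}=S_{AC}=1+|\vec c|^2>1$ while $S_{BC}=1-2|\vec c|^2<1$.

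I expect the main obstacle to be the pure-state identity: the cancellation producing the constant $3$ rests entirely on the bipartite purity equality $\trace[\rho_{AB}^2]=\trace[\rho_C^2]$, so the Hilbert--Schmidt bookkeeping and the Schmidt-spectrum argument must be done carefully. Once the pure-state value is pinned down, the passage to mixed states is routine convexity, and the explicit construction confirming non-monogamy is a matter of choosing a consistent Bloch configuration.
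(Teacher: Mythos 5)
Your proposal is correct and follows essentially the same route as the paper: the purity identity $\trace[\rho_{AB}^2]=\trace[\rho_C^2]$ for pure states yielding $S_{AB}=1+2|\vec c|^2-|\vec a|^2-|\vec b|^2$ and hence $S_{AB}+S_{AC}+S_{BC}=3$, followed by convexity of each $S_{XY}$ to get the bound $\le 3$ for mixed states, and finally a state with two steerable marginals. The only piece you leave implicit is the explicit example, but your Bloch-vector condition ($|\vec a|=0$, $|\vec b|=|\vec c|\neq 0$, up to relabelling) is exactly the paper's condition and is realized, e.g., by $\frac{1}{2}(|100\rangle+|010\rangle+\sqrt{2}|001\rangle)$, for which $S_{AC}=S_{BC}=\frac{5}{4}$.
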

\begin{proof}
Any three qubit state $\rho_{ABC}$ can be represented as

\begin{align}\label{mon12}
&\rho_{ABC}  = \frac{1}{8}[\mathbb{I} \otimes \mathbb{I} \otimes \mathbb{I} + \vec{a} \cdot \vec{\sigma} \otimes \mathbb{I} \otimes \mathbb{I} + \mathbb{I} \otimes \vec{b} \cdot \vec{\sigma} \otimes \mathbb{I}\nonumber\\
& +\mathbb{I} \otimes \mathbb{I}  \otimes \vec{c} \cdot \vec{\sigma} + \sum_{ij} t^{AB}_{ij}  {\sigma}_{i}  \otimes {\sigma}_{j} \otimes \mathbb{I} +  \sum_{ik} t^{AC}_{ik}  {\sigma}_{i}  \otimes \mathbb{I}  \otimes {\sigma}_{k} \nonumber \\
& + \sum_{jk} t^{BC}_{jk}  \mathbb{I}  \otimes {\sigma}_{j}  \otimes {\sigma}_{k} +  \sum_{ijk} t^{ABC}_{ijk} {\sigma}_{i}  \otimes {\sigma}_{j}  \otimes {\sigma}_{k} ]
\end{align}
In the following we denote $\rho_{i}$ as the reduced density matrices for the subsystem $i = A,B,C$. One computes from Eq.(\ref{mon12}), that
\begin{equation}\label{mon13}
tr(\rho_{A}^{2}) = \frac{1+ \vec{a}^{2}}{2}, Tr(\rho_{BC}^{2}) = \frac{1}{4}(1+ \vec{b}^{2} + \vec{c}^{2} + S_{BC}).
\end{equation}
Similarly,
\begin{eqnarray}\label{mon14}
&&tr(\rho_{B}^{2}) = \frac{1+ \vec{b}^{2}}{2}, Tr(\rho_{AC}^{2}) = \frac{1}{4}(1+ \vec{a}^{2} + \vec{c}^{2} +S_{AC}),\nonumber\\
&&tr(\rho_{C}^{2}) = \frac{1+ \vec{c}^{2}}{2}, Tr(\rho_{AB}^{2}) = \frac{1}{4}(1+ \vec{a}^{2} + \vec{b}^{2} + S_{AB}).
\end{eqnarray}
First consider $\rho_{ABC}$ is a pure state. Then from Schimdt decomposition, we have  $Tr(\rho_{i}^{2}) = Tr(\rho_{jk}^{2})$ for $i \neq j \neq k$, $i, j, k = A,B,C$.
Using these relations and Eqs.(\ref{mon13},\ref{mon14}), it is straightforward to calculate $S_{ij}$ of each pair of qubits, yielding:
\begin{equation}\label{mon15}
S_{AB} = 1 + 2 \vec{c}^{2} - \vec{a}^{2} - \vec{b}^{2},
\end{equation}

\begin{equation}\label{mon16}
S_{AC} = 1 + 2 \vec{b}^{2} - \vec{a}^{2} - \vec{c}^{2},
\end{equation}
and
\begin{equation}\label{mon17}
S_{BC} = 1 + 2 \vec{a}^{2} - \vec{b}^{2} - \vec{c}^{2}.
\end{equation}
Adding these three relations and simplifying it, one obtains the following relation:
\begin{equation}\label{mon18}
S_{AB} + S_{AC} + S_{BC} = 3.
\end{equation}
This relation is derived by the similar method used in \cite{Ken} for developing Bell monogamy relations.\\
Now, taking mixed state $\rho_{ABC}$ as $\sum_{n} p_{n}|\psi_{n}\rangle \langle \psi_{n}|$, one has
$S_{AB} \leq \sum_{n} p_{n} S^{n}_{AB}$, and similar relations for $S_{AC}$, $S_{BC}$.
Adding these relations and using Eq.(\ref{mon18}), we obtain
\begin{equation}\label{mon19}
S_{AB} + S_{AC} + S_{BC} \leq 3
\end{equation}
This is a trade off relation among two qubits of any three qubit state $\rho_{ABC}$. Now $S_{AB} >1$ is sufficient for Alice and bob to witness violation of $F_{3}$ inequality. Hence, inequality Eq.(\ref{mon19}) imposes constraint on quantum steering: it is impossible that all pair of qubits violate $F_{3}$ inequality. \\
But the trade off relation (\ref{mon19}) is unable to assure us about the number of two qubit reduced states that can violate $F_{3}$ inequality.
To complete the proof, we still have to find two reduced states of $\rho_{ABC}$ which violate $F_{3}$ inequality. \\
Using Eqs.(\ref{mon15},\ref{mon16},\ref{mon17}), one can easily find that the reduced states $\rho_{AB}$  and $\rho_{AC}$ of the pure three qubit state $\rho_{ABC}$ will violate $F_{3}$ inequality iff the following inequality is satisfied :
\begin{equation}\label{mon20}
\vec{c}^{2} > \frac{\vec{a}^{2} + \vec{b}^{2}}{2},
\vec{b}^{2} > \frac{\vec{a}^{2} + \vec{c}^{2}}{2},
\end{equation}
One can similarly obtain condition of violation for other pairs of reduced states.
Now, consider the fully entangled three qubit state,
\begin{equation}\label{mon21}
|\psi_{ABC}\rangle = \frac{1}{2}(|100\rangle +|010\rangle + \sqrt{2} |001\rangle).
\end{equation}
By using the above conditions, one can find that bipartite correlations between party A and C of subsystem AC and between B and C of subsystem BC violate the $F_{3}$ inequality: $S_{BC} =  S_{AC}  = 1+ \frac{1}{4}>1$. This shows that some of the steering correlations between party A and C can thus be shared with party B and C. Thus, under some conditions (for example, Eq. (\ref{mon20}) and its permutations), steering is non-monogamous with respect to $F_{3}$ inequality.
\end{proof}
The above result on symmetric states leads to the following corollary.
\begin{corollary}\label{monc1}
None of the three reduced states of any three qubit symmetric state $\rho_{ABC}$ violates $F_{3}$ inequality i.e steering is monogamous for such states with respect to $F_{3}$ inequality.
\end{corollary}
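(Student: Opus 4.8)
The plan is to combine the permutation symmetry of $\rho_{ABC}$ with the trade-off relation (\ref{mon19}) already established in the theorem. First I would observe that a symmetric three-qubit state is by definition invariant under any permutation of its subsystems; in particular, exchanging two parties maps one bipartite reduced state onto another. Since the partial trace commutes with this relabeling, all three reduced states must coincide, $\rho_{AB} = \rho_{AC} = \rho_{BC}$. Consequently their correlation matrices have equal Hilbert--Schmidt norms, so that the three steering quantities agree: $S_{AB} = S_{AC} = S_{BC}$.

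Next I would invoke the trade-off inequality (\ref{mon19}), which holds for an arbitrary three-qubit state, $S_{AB} + S_{AC} + S_{BC} \leq 3$. Writing $S := S_{AB} = S_{AC} = S_{BC}$ for the common value reduces this bound to $3S \leq 3$, hence $S \leq 1$. Finally, recalling from Eq.(\ref{mon8}) that a reduced state is $F_{3}$-steerable precisely when its $S_{ij} > 1$, the bound $S_{AB} = S_{AC} = S_{BC} \leq 1$ shows that none of the three reduced states can violate the $F_{3}$ inequality, which is exactly the asserted monogamy.

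The main subtlety lies in the first step: one must be certain that symmetry forces the three steering quantities to be equal. For pure symmetric states this is transparent from Eqs.(\ref{mon15})--(\ref{mon17}), which upon setting $\vec{a}^{2} = \vec{b}^{2} = \vec{c}^{2}$ collapse immediately to $S_{AB} = S_{AC} = S_{BC} = 1$, already saturating the bound. For mixed symmetric states the pure-state formulas no longer apply term by term, so the cleanest route is the reduced-state argument above, which establishes the equality $S_{AB} = S_{AC} = S_{BC}$ for pure and mixed states alike; feeding this into (\ref{mon19}) then yields the corollary in full generality.
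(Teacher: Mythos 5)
Your proof is correct and follows exactly the route the paper intends: the corollary is stated as an immediate consequence of the trade-off relation (\ref{mon19}), and permutation symmetry forcing $\rho_{AB}=\rho_{AC}=\rho_{BC}$ (hence $S_{AB}=S_{AC}=S_{BC}\leq 1$) is precisely the missing link the paper leaves implicit. Your remark that the argument via identical reduced states covers mixed symmetric states, where the pure-state formulas (\ref{mon15})--(\ref{mon17}) do not directly apply, is a welcome clarification rather than a deviation.
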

\begin{figure}
\includegraphics[scale=0.32]{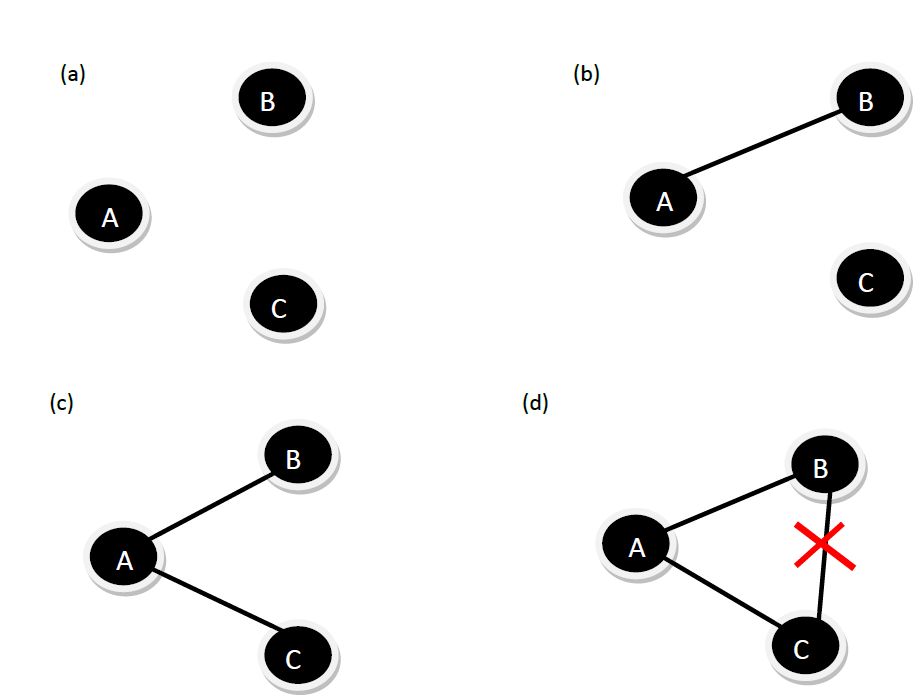}
\caption{Steering Graphs: Here each circle represents physical system and a solid line connecting two systems describes the bipartite steering correlation between them. Different possibilities of sharing bipartite steering among three distant physical systems are depicted in this figure. (a) No bipartite steering is detected between individual parties. For example, the tripartite GHZ state  \cite{Ghz} $|\phi_{GHZ}\rangle = \frac{|000\rangle + |111\rangle}{\sqrt{2}}$ has no bipartite steering. (b) bipartite steering of one reduced state is detected. One such kind of state is pure biseparable state. (c) Two bipartite reduced states are steerable. As we have shown in sec.(\ref{monsec1}), the state $|\psi_{ABC}\rangle$ belongs to this group. (d) Trade-off relation Eq.(\ref{mon19}) prevents bipartite steering between every pair of systems.}\label{monp}
\end{figure}
 Theorem\ref{mont1} guarantees existence of three qubit states for which all two-party
reduced states with respect to a certain observer violate the $F_{3}$ inequality(Fig.\ref{monp}). This non-monogamous nature of steering allows one to reveal shareability(non-monogamous nature) of the entanglement of bipartite mixed states. As far as the shareability of quantum correlations is concerned, quantum entanglement is strictly speaking only monogamous in the case of pure entangled states. If the state of two systems, say $\rho_{AB}$ is a mixed entangled state, then it is possible that both of the two systems $A$ and $B$ are entangled to a third system, say $C$. For example, the so-called $W$ state \cite{Dur} $|W\rangle = \frac{(|001\rangle + |010\rangle + |100\rangle)}{\sqrt{3}}$ has bipartite reduced states that are all identical and entangled. Thus, entanglement of these reduced bipartite mixed states is sharable (non-monogamous), however, the steering correlations obtainable from these states follow the monogamy inequality Eq.(\ref{mon11}). So, by considering $F_{2}$ inequality, one cannot reveal shareability of entanglement of bipartite mixed states. To reveal this, steering correlations obtainable from these states must be non-monogamous. As shown above in Theorem\ref{mont1}, the state $|\psi_{ABC}\rangle$ (Eq.(\ref{mon21})) provides steerable bipartite reduced states between subsystems $AC$ and $BC$. Therefore the corresponding reduced mixed states $\rho_{AC}$ and $\rho_{BC}$ are also entangled and the two qubit mixed entangled state $\rho_{AC}$ is shareable to at least one other qubit. This in turn indicates that $F_{3}$ inequality is an appropriate ingredient to reveal shareability of entanglement of mixed states.\\
Unlike the standard $|W\rangle$ state, the state $|\psi_{ABC}\rangle$ can be used as a resource for deterministic teleportation and dense coding \cite{Pat}. As another application of the non-monogamous nature of steering correlations, consider that a pure three qubit state is provided to experimentalists which they have to use as a resource in deterministic teleportation or dense coding. They are also provided with the information that the state is either $|\psi_{ABC}\rangle$ or $|W\rangle.$ We show that the non-monogamy phenomenon as described in theorem\ref{mont1} can be used to determine the desired state. For $|W\rangle$ state, $S_{ij} = 1$ for all reduced states, so monogamy is preserved. On the other hand, the state $|\psi_{ABC}\rangle$ does not follow the monogamy as shown in theorem \ref{mont1}. Thus, the above result distinguishes the two types of states though they belong to the same class ($W$-like states \cite{Dur}). \\
 Keeping in mind the usefulness of shareability relations, one naturally would be interested to know which of the three qubit states obey monogamy(or non-monogamy) of steering. The explicit evaluation of the number of reduced steerable states along with monogamy(or non-monogamy) in each class of three qubit pure states as classified by Sab\'{i}n and Garc\'{i}a-Alcaine\cite{Sab} is reported in Appendix \ref{Classi}, where we see that only star shaped states and $W$-like states can be non-monogamous. Next we ask whether non-monogamous behavior of those two classes of pure states is robust against white noise admixture. The results are presented in Appendix \ref{Classi}, where it is shown that less entangled states can be more robust against white noise admixture in comparison to higher entangled states. \\
 Other than constraint given by Eq. (\ref{mon20}) and its permutations, few other conditions are also derived in the following sections under which $F_{3}$ steering is non-monogamous.
\section{Reduced Steering Versus Entanglement}\label{monsec3}
In two qubit systems, the more entangled a pure state is, the more it can violate the Bell-CHSH inequality. In this context, a relevant study is to find the relation between violation of $F_{3}$ inequality by the reduced bipartite states of a pure state and their corresponding entanglement(with respect to some measure).
The relation between $S_{AB}$ and concurrence $\mathcal{C}_{AB}$(a measure of entanglement)\cite{Hil,Wot} can be derived with similar methods used in \cite{Ves}. For pure bipartite states the relation is $S_{AB} = 1+2 \, \mathcal{C}_{AB}^{2}$. Hence, for pure states, more entanglement generates  larger violation of $F_{3}$ inequality. However from this relation we cannot infer anything about mixed bipartite reduced states of a three qubit pure state. In the theorem below we derive a relation justifying our claim.

\begin{theorem}\label{mont4}
The triples $(S_{AB},S_{AC},S_{BC})$ of three reduced states obtained from a pure three qubit state and $(\mathcal{C}_{AB},\mathcal{C}_{AC},\mathcal{C}_{BC})$ maintain the same ordering i.e.,
\begin{equation}\label{mon24}
S_{AB}> S_{AC} > S_{BC}\,\,\,\,\, \text{iff} \,\,\,\, \mathcal{C}_{AB} > \mathcal{C}_{AC} > \mathcal{C}_{BC}.
\end{equation}
\end{theorem}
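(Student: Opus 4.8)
The plan is to reduce both orderings to one common quantity: the ordering of the squared local Bloch-vector lengths $\vec{a}^2,\vec{b}^2,\vec{c}^2$. First I would carry over what is already available from the proof of Theorem \ref{mont1}. Equations (\ref{mon15})--(\ref{mon17}) express $S_{AB},S_{AC},S_{BC}$ entirely in terms of $\vec{a}^2,\vec{b}^2,\vec{c}^2$, and taking pairwise differences gives at once $S_{AB}-S_{AC}=3(\vec{c}^2-\vec{b}^2)$, $S_{AC}-S_{BC}=3(\vec{b}^2-\vec{a}^2)$, and $S_{AB}-S_{BC}=3(\vec{c}^2-\vec{a}^2)$. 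Hence the ordering of the triple $(S_{AB},S_{AC},S_{BC})$ is governed solely by the ordering of $(\vec{c}^2,\vec{b}^2,\vec{a}^2)$.

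Next I would derive the analogous expressions for the concurrences. Since $\rho_{ABC}$ is pure, Eqs.~(\ref{mon13})--(\ref{mon14}) give $\mathrm{Tr}(\rho_A^2)=(1+\vec{a}^2)/2$ and cyclically, so the tangle across each one-versus-two bipartition is $\mathcal{C}_{A|BC}^2=2\bigl(1-\mathrm{Tr}(\rho_A^2)\bigr)=1-\vec{a}^2$, and similarly $\mathcal{C}_{B|AC}^2=1-\vec{b}^2$, $\mathcal{C}_{C|AB}^2=1-\vec{c}^2$. I would then invoke the Coffman--Kundu--Wootters monogamy relation \cite{Cof}, $\mathcal{C}_{A|BC}^2=\mathcal{C}_{AB}^2+\mathcal{C}_{AC}^2+\tau$ together with its two cyclic permutations, where the three-tangle $\tau$ is the \emph{same} in all three because it is invariant under permutations of the parties. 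Subtracting these relations in pairs makes the common $\tau$ cancel and yields $\mathcal{C}_{AB}^2-\mathcal{C}_{AC}^2=\vec{c}^2-\vec{b}^2$, $\mathcal{C}_{AC}^2-\mathcal{C}_{BC}^2=\vec{b}^2-\vec{a}^2$, and $\mathcal{C}_{AB}^2-\mathcal{C}_{BC}^2=\vec{c}^2-\vec{a}^2$.

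Comparing the two sets of differences produces the single clean identity $S_{ij}-S_{kl}=3\,(\mathcal{C}_{ij}^2-\mathcal{C}_{kl}^2)$ for every pair of reduced states, so $S_{ij}-S_{kl}$ and $\mathcal{C}_{ij}^2-\mathcal{C}_{kl}^2$ always carry the same sign. Because the Wootters concurrences are nonnegative, ordering the $\mathcal{C}_{ij}^2$ is equivalent to ordering the $\mathcal{C}_{ij}$ themselves, and the proportionality factor $3>0$ preserves strict inequalities. The chain $S_{AB}>S_{AC}>S_{BC}\iff\vec{c}^2>\vec{b}^2>\vec{a}^2\iff\mathcal{C}_{AB}>\mathcal{C}_{AC}>\mathcal{C}_{BC}$ then establishes Eq.~(\ref{mon24}).

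The step I expect to be the crux is the second one: translating the single-qubit purities into the bipartite tangles and, above all, exploiting the permutation symmetry of the three-tangle so that $\tau$ drops out of the pairwise differences; everything else is linear bookkeeping. A point worth flagging explicitly is that the CKW relation is being applied to the genuinely \emph{mixed} reduced states, so the $\mathcal{C}_{ij}$ are Wootters concurrences, which is precisely the regime in which the monogamy identity holds with a common residual three-tangle.
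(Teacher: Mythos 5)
Your proposal is correct and follows essentially the same route as the paper: both take pairwise differences of Eqs.~(\ref{mon15})--(\ref{mon17}) to get $S_{AB}-S_{AC}=3(\vec{c}^{2}-\vec{b}^{2})$, and both use the permutation invariance of the three-tangle (the paper quotes the three equalities of Eq.~(\ref{mon26}) directly, while you rederive them from the CKW relation and the purity formulas) to obtain $\mathcal{C}_{AB}^{2}-\mathcal{C}_{AC}^{2}=\vec{c}^{2}-\vec{b}^{2}$ and hence $S_{ij}-S_{kl}=3(\mathcal{C}_{ij}^{2}-\mathcal{C}_{kl}^{2})$. The only difference is that you make explicit the derivation of Eq.~(\ref{mon26}) and the final step from squared concurrences to concurrences, which the paper leaves implicit.
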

\begin{proof}
By eliminating $\vec{a}$ from Eq.(\ref{mon15}) and Eq.(\ref{mon16}), we have
\begin{equation}\label{mon25}
S_{AB} - S_{AC} = 3 (\vec{c}^{2} - \vec{b}^{2}).
\end{equation}
Now, three tangle $\tau$ \cite{Cof}, for three qubit pure state, is given by\cite{Che}
\begin{align}\label{mon26}
&&\tau= 1 - \vec{a}^{2} - \mathcal{C}_{AB}^{2} - \mathcal{C}_{AC}^{2}\nonumber \\
       &&= 1 - \vec{b}^{2} - \mathcal{C}_{AB}^{2} - \mathcal{C}_{BC}^{2}\nonumber \\
       &&= 1 - \vec{c}^{2} - \mathcal{C}_{AC}^{2} - \mathcal{C}_{BC}^{2}
\end{align}
Comparing these equalities, we obtain
\begin{equation}\label{mon27}
\mathcal{C}_{AB}^{2} - \mathcal{C}_{AC}^{2} = \vec{c}^{2} - \vec{b}^{2}
\end{equation}
and its permutations, which immediately lead to
\begin{equation}\label{mon28}
S_{AB} - S_{AC} = 3 (\mathcal{C}_{AB}^{2} - \mathcal{C}_{AC}^{2})
\end{equation}
and its permutations. Thus, we have developed the ordering relation as per Eq.(\ref{mon24}).
\end{proof}
It is interesting to note that $(S_{AB},S_{AC},S_{BC})$ and $(\vec{c}^{2}, \vec{b}^{2}, \vec{a}^{2} )$ follow the same ordering for all pure three qubit state.\\
Distribution of bipartite quantum entanglement (i.e., entanglement of reduced bipartite states) of any pure three qubit state is subjected to certain shareability laws. In particular, addition of squared concurrence of all bipartite reduced states cannot be greater than $\frac{4}{3}$\cite{Dur},
\begin{equation}\label{mon100}
\mathcal{C}_{AB}^{2} + \mathcal{C}_{AC}^{2} +  \mathcal{C}_{BC}^{2} \leq \frac{4}{3}.
\end{equation}
This shareability constraint indicates that shareability of reduced bipartite steerability as well as individual bipartite steerability of any pure three qubit state might depend on concurrence of reduced bipartite states. This is in fact the case. We next discuss few results in this direction.

\begin{theorem}\label{mont4}
If the squared concurrence of any bipartite reduced state for a pure three qubit state is greater than  $\frac{4}{9}$, then the corresponding reduced state is $F_{3}$ steerable i.e.,
if $\mathcal{C}_{ij}^{2} > \frac{4}{9}$ $(i \neq j, i, j = A, B, C)$, then the corresponding reduced state $\rho_{ij}$ is $F_{3}$ steerable.
\end{theorem}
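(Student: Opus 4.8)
The plan is to collapse $S_{AB}$ into a pure statement about the three squared concurrences, and then play this identity off against the concurrence shareability bound Eq.(\ref{mon100}). The threshold $\tfrac49$ should drop out as exactly the worst case in which the concurrence budget is saturated.

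First I would invert the three expressions for the three-tangle in Eq.(\ref{mon26}) to solve for the squared Bloch-vector lengths, obtaining $\vec{a}^{2} = 1 - \tau - \mathcal{C}_{AB}^{2} - \mathcal{C}_{AC}^{2}$ together with its two cyclic permutations for $\vec{b}^{2}$ and $\vec{c}^{2}$. Substituting these into the expression Eq.(\ref{mon15}) for $S_{AB} = 1 + 2\vec{c}^{2} - \vec{a}^{2} - \vec{b}^{2}$, the three-tangle $\tau$ cancels identically and one is left with the key identity
\begin{equation}
S_{AB} = 1 + 2\,\mathcal{C}_{AB}^{2} - \mathcal{C}_{AC}^{2} - \mathcal{C}_{BC}^{2},
\end{equation}
together with its permutations. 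By the steerability criterion Eq.(\ref{mon8}), the state $\rho_{AB}$ is $F_{3}$ steerable whenever $S_{AB} > 1$, which by the identity above is equivalent to the single inequality $2\,\mathcal{C}_{AB}^{2} > \mathcal{C}_{AC}^{2} + \mathcal{C}_{BC}^{2}$.

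Next I would bound the right-hand side using the concurrence shareability law Eq.(\ref{mon100}), namely $\mathcal{C}_{AB}^{2} + \mathcal{C}_{AC}^{2} + \mathcal{C}_{BC}^{2} \le \tfrac43$, which gives $\mathcal{C}_{AC}^{2} + \mathcal{C}_{BC}^{2} \le \tfrac43 - \mathcal{C}_{AB}^{2}$. The hypothesis $\mathcal{C}_{AB}^{2} > \tfrac49$ then yields $3\,\mathcal{C}_{AB}^{2} > \tfrac43$, so that
\begin{equation}
2\,\mathcal{C}_{AB}^{2} > \tfrac43 - \mathcal{C}_{AB}^{2} \ge \mathcal{C}_{AC}^{2} + \mathcal{C}_{BC}^{2},
\end{equation}
which is precisely the condition for $S_{AB} > 1$. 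Hence $\rho_{AB}$ is $F_{3}$ steerable, and by the full permutation symmetry of both the identity and the bound the same argument applies verbatim to any pair $ij$.

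I do not expect a genuine obstacle here: the entire content is the cancellation of $\tau$ that produces the clean identity, after which the result follows from a one-line comparison with the budget $\tfrac43$. The only point requiring care is the bookkeeping in the substitution step, where one must confirm that the $\tau$-dependent terms cancel exactly so that $S_{AB}$ depends on the concurrences alone; once that is verified, the threshold $\tfrac49$ is seen to be the value at which the worst-case allocation $\mathcal{C}_{AC}^{2} + \mathcal{C}_{BC}^{2} = \tfrac43 - \mathcal{C}_{AB}^{2}$ just fails to obstruct steerability, which also signals that the bound is tight.
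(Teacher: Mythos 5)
Your proposal is correct and follows essentially the same route as the paper: both derive the identity $S_{AB} = 1 + 2\,\mathcal{C}_{AB}^{2} - \mathcal{C}_{AC}^{2} - \mathcal{C}_{BC}^{2}$ from the three-tangle relations in Eq.~(\ref{mon26}) and then combine it with the shareability bound Eq.~(\ref{mon100}) to force $S_{AB}>1$ when $\mathcal{C}_{AB}^{2}>\tfrac{4}{9}$. The only cosmetic difference is that the paper phrases the final step with $\mathcal{C}_{AB}^{2}=\tfrac{4}{9}+\epsilon$ while you argue the inequality directly.
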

\begin{proof}
By using  Eqs. (\ref{mon27}) and (\ref{mon15},\ref{mon16},\ref{mon17}), each of $S_{ij}$ can be expressed in terms of $\mathcal{C}_{ij}$,
 \begin{equation}\label{mon99}
S_{AB} = 1 + 2 \, \mathcal{C}_{AB}^{2} - \mathcal{C}_{AC}^{2} - \mathcal{C}_{BC}^{2}
\end{equation}
and its permutations.
Let, $\mathcal{C}_{AB}^{2} = \frac{4}{9} + \epsilon$, where $\epsilon$ is sufficiently small positive number. This immediately restricts the sum of squared concurrence of other two bipartite reduced states,

\begin{equation}\label{mon98}
\mathcal{C}_{AC}^{2} + \mathcal{C}_{BC}^{2} \leq \frac{8}{9} - \epsilon.
\end{equation}
Applying these to the expression of $S_{AB}$, this leads to the sharp inequality $S_{AB} \geq 1 + \epsilon$. So, if $\mathcal{C}_{AB}^{2} > \frac{4}{9}$, the $F_{3}$ inequality is violated. Similarly, it can be proved for other bipartite reduced states.
\end{proof}
This result holds for all three qubit pure states. As an example, consider the pure state $|\psi_{ABC}\rangle$ which has two $F_{3}$ steerable reduced states $\rho_{AC}$ and $\rho_{BC}$ with $\mathcal{C}_{AC}^{2}  = \mathcal{C}_{BC}^{2} = \frac{1}{2} > \frac{4}{9}$.
However, one should note that the above inequality $\mathcal{C}_{ij}^{2} > \frac{4}{9}$ is only a sufficient condition for $F_{3}$ steerability of reduced bipartite state $\rho_{ij}$, because there are reduced states which violate  the inequality $\mathcal{C}_{ij}^{2} > \frac{4}{9}$, but still give rise to $F_{3}$ steerability. One such example is $|\phi_{con}\rangle = \frac{\sqrt{3}}{2} |000\rangle + \frac{1}{2\sqrt{2}} |101\rangle + \frac{1}{2\sqrt{2}} |110\rangle$. For this state, from the above formulae one can obtain $\mathcal{C}_{AB}^{2} = \frac{3}{8} < \frac{4}{9}$ and $S_{AB} = 1 + \frac{5}{16}$. Clearly, the reduced state $\rho_{AB}$ violates $F_{3}$ inequality, while it violates the inequality $\mathcal{C}_{AB}^{2} > \frac{4}{9}$.
Although, an obvious necessary and sufficient condition can be derived from Eq.(\ref{mon99}) and its permutations.
\begin{corollary}\label{monc2}
Any reduced state $\rho_{ij}$ of a three qubit pure state will violate $F_{3}$ inequality if and only if squared concurrence of the corresponding reduced state is greater than the average of the squared concurrence of the remaining two reduced states,i.e.,\\
$S_{ij}>1$ if and only if $\mathcal{C}_{ij}^{2} > \frac{\mathcal{C}_{ik}^{2} + \mathcal{C}_{jk}^{2}}{2}$, where $ i \neq j \neq k$ and $i, j, k = A, B, C$.
\end{corollary}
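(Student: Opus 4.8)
The plan is to reduce the statement to a one-line rearrangement of the concurrence expression for $S_{ij}$ already established in Eq.~(\ref{mon99}). By the criterion recorded in Eq.~(\ref{mon8}) (equivalently, from $\max_\mu F_3 = \sqrt{S_{ij}}$ in Eq.~(\ref{mon6})), a reduced state $\rho_{ij}$ of a pure three qubit state violates the $F_3$ inequality precisely when $S_{ij} > 1$. So the entire task is to translate the condition $S_{ij} > 1$ into the asserted condition on the concurrences, and the work is already done the moment Eq.~(\ref{mon99}) is available.

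First I would take the identity $S_{AB} = 1 + 2\mathcal{C}_{AB}^{2} - \mathcal{C}_{AC}^{2} - \mathcal{C}_{BC}^{2}$ from Eq.~(\ref{mon99}) and feed it into the inequality $S_{AB} > 1$. Cancelling the $1$ on both sides gives $2\mathcal{C}_{AB}^{2} - \mathcal{C}_{AC}^{2} - \mathcal{C}_{BC}^{2} > 0$, which rearranges at once to $\mathcal{C}_{AB}^{2} > (\mathcal{C}_{AC}^{2} + \mathcal{C}_{BC}^{2})/2$, i.e.\ the squared concurrence of the pair $AB$ exceeds the average of the squared concurrences of the other two pairs. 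The point I would stress is that every step here is a genuine \emph{equivalence}, not a one-way implication: the substitution is an exact identity valid for all pure three qubit states, and cancelling a constant and dividing by $2$ are reversible. Hence $S_{AB} > 1$ holds if and only if $\mathcal{C}_{AB}^{2} > (\mathcal{C}_{AC}^{2} + \mathcal{C}_{BC}^{2})/2$, which is exactly the biconditional claimed. Because Eq.~(\ref{mon99}) holds together with its permutations under relabelling of $A, B, C$, applying the same manipulation to each cyclic permutation yields the general statement for arbitrary distinct $i, j, k \in \{A, B, C\}$.

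Honestly there is no substantive obstacle: the mathematical content of the corollary is carried entirely by Eq.~(\ref{mon99}), and what remains is the rearrangement of a linear inequality together with the observation that the chain of steps is reversible. The only thing worth being careful about is the wording rather than the logic: the corollary concerns \emph{violation of the $F_3$ inequality} specifically, and for this property $S_{ij} > 1$ is exactly necessary and sufficient via Eq.~(\ref{mon6}), even though (as noted after Eq.~(\ref{mon8})) $S_{ij} > 1$ is only sufficient, not necessary, for the broader notion of steerability. Keeping that distinction explicit is what makes the biconditional defensible.
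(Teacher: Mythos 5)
Your proof is correct and is essentially the argument the paper intends: the corollary is stated there as an ``obvious'' consequence of Eq.~(\ref{mon99}), namely substituting $S_{AB} = 1 + 2\mathcal{C}_{AB}^{2} - \mathcal{C}_{AC}^{2} - \mathcal{C}_{BC}^{2}$ into $S_{AB}>1$ and rearranging, exactly as you do. Your added remark distinguishing violation of the $F_3$ inequality (for which $S_{ij}>1$ is necessary and sufficient via Eq.~(\ref{mon6})) from steerability in general (for which it is only sufficient) is a worthwhile clarification, not a deviation.
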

Due to the shareability constraint Eq.(\ref{mon100}), violation of one of the reduced states (say, $\rho_{AB}$) puts strong restriction on the
average of squared concurrences of the remaining reduced states.
\begin{corollary}\label{monc3}
For any $F_{3}$ steerable reduced state $\rho_{ij}$, the following inequality holds : \\
$\frac{\mathcal{C}_{ik}^{2} + \mathcal{C}_{jk}^{2}}{2} < \frac{4}{9}$, where $ i \neq j \neq k$ and $i, j, k = A, B, C$.
\end{corollary}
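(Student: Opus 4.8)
The plan is to combine two ingredients already in hand: the necessary-and-sufficient characterization of $F_3$ steerability from Corollary \ref{monc2}, and the global concurrence shareability bound of Eq.(\ref{mon100}). Writing $x = \mathcal{C}_{ij}^2$, $y = \mathcal{C}_{ik}^2$, $z = \mathcal{C}_{jk}^2$ for brevity, the hypothesis that $\rho_{ij}$ is $F_3$ steerable means $S_{ij} > 1$, which by Corollary \ref{monc2} is equivalent to $x > (y+z)/2$. Separately, Eq.(\ref{mon100}) supplies $x + y + z \leq 4/3$. The goal is precisely to bound $(y+z)/2$, so the natural strategy is to eliminate $x$ between these two relations.

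First I would isolate $x$ in the shareability bound, obtaining $x \leq 4/3 - (y+z)$. Chaining this with the steerability inequality $x > (y+z)/2$ gives $(y+z)/2 < 4/3 - (y+z)$. Setting $s := y+z$, this is the single-variable inequality $3s/2 < 4/3$, hence $s < 8/9$; dividing by $2$ yields exactly $(y+z)/2 < 4/9$, i.e.\ $(\mathcal{C}_{ik}^2 + \mathcal{C}_{jk}^2)/2 < 4/9$, as claimed. Because both Corollary \ref{monc2} and Eq.(\ref{mon100}) hold for every labelling with $i \neq j \neq k$, the argument is symmetric under permutations and settles all three cases simultaneously.

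There is no serious obstacle here; once the correct two facts are identified, the content reduces to a one-line linear elimination. The only points requiring care are bookkeeping ones. The steerability condition is strict ($x > (y+z)/2$) while the shareability bound is non-strict ($\leq 4/3$), so the chained inequality inherits a strict sign, correctly reproducing the strict $< 4/9$ in the statement. I would also verify that the elimination step imposes no hidden positivity requirement on $x$ or sign constraint on $y,z$; since concurrences are non-negative by definition, all quantities are automatically admissible and no auxiliary hypotheses are needed.
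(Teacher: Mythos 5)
Your argument is correct and is exactly the route the paper intends (it leaves Corollary \ref{monc3} unproved, but the preceding sentence points to combining the steerability criterion of Corollary \ref{monc2} with the shareability bound Eq.(\ref{mon100}), which is precisely your linear elimination). The strict/non-strict bookkeeping and the permutation symmetry are handled properly, so nothing is missing.
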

As shown in \cite{Cof}, sum of squared concurrence between $i$ and $k$, and the squared concurrence between $j$ and $k$, cannot be greater than $1$, i.e., $\mathcal{C}_{ik}^{2} + \mathcal{C}_{jk}^{2} \leq 1$. Hence, from the above corollary it is observed that this restriction is further strengthened if one consider $F_{3}$ steerability of $\rho_{ij}$. \\
Since the last corollary (\ref{monc3})  puts more stringent restriction, using it we get the following sufficient condition for monogamy of $F_{3}$ steering:
\begin{corollary}\label{monc4}
For any pure three qubit state $\rho_{ABC}$, steering correlations will obey monogamy if  $\mathcal{C}_{ik}^{2} + \mathcal{C}_{jk}^{2} \geq \frac{8}{9}$, where $ i \neq j \neq k$ and $i, j, k = A, B, C$ holds for at least two of three possible cases.
\end{corollary}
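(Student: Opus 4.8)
The plan is to read this off directly from the contrapositive of Corollary~\ref{monc3}, so that essentially no new computation is required. First I would reformulate Corollary~\ref{monc3} as a non-steerability certificate: since every $F_{3}$ steerable reduced state $\rho_{ij}$ must satisfy $\mathcal{C}_{ik}^{2} + \mathcal{C}_{jk}^{2} < \frac{8}{9}$, the hypothesis $\mathcal{C}_{ik}^{2} + \mathcal{C}_{jk}^{2} \geq \frac{8}{9}$ forces $S_{ij} \leq 1$, i.e.\ it guarantees that $\rho_{ij}$ is \emph{not} $F_{3}$ steerable.

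Next I would make the index bookkeeping explicit, since the point of the statement is that each of the ``three possible cases'' is a bound on the pair of concurrences sharing the left-out party $k$, and hence pins down a distinct reduced state. Concretely I would record the three implications
\begin{align*}
\mathcal{C}_{AC}^{2} + \mathcal{C}_{BC}^{2} \geq \frac{8}{9} &\ \Rightarrow\ S_{AB} \leq 1, \\
\mathcal{C}_{AB}^{2} + \mathcal{C}_{BC}^{2} \geq \frac{8}{9} &\ \Rightarrow\ S_{AC} \leq 1, \\
\mathcal{C}_{AB}^{2} + \mathcal{C}_{AC}^{2} \geq \frac{8}{9} &\ \Rightarrow\ S_{BC} \leq 1,
\end{align*}
observing that the three hypotheses are in one-to-one correspondence with the three reduced states, each excluding a different one.

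The proof then closes with a simple counting step: if at least two of the three hypotheses hold, then at least two of $\rho_{AB}, \rho_{AC}, \rho_{BC}$ are certified non-steerable, so at most one reduced state can violate the $F_{3}$ inequality. This is exactly the monogamy condition for $F_{3}$ steering, completing the argument.

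The one thing to verify carefully --- the only real subtlety, rather than a genuine obstacle --- is that the three conditions target three \emph{distinct} reduced states. This is where the identity of the left-out index $k$ matters: in each line above the excluded reduced state is the pair not containing $k$, so any two of the conditions eliminate two different reduced states rather than the same one twice. Once this matching is confirmed, the corollary follows immediately from Corollary~\ref{monc3} with no further estimates.
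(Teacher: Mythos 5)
Your proposal is correct and follows essentially the same route as the paper, which derives this corollary directly from Corollary~\ref{monc3} (the paper gives no further argument beyond invoking that corollary). Your explicit index bookkeeping and the observation that the three hypotheses exclude three \emph{distinct} reduced states simply make precise the counting step the paper leaves implicit.
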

It may be noted that theorem \ref{mont4}, gives rise to a sufficient condition for non-monogamy of $F_{3}$ steerability.
\begin{corollary}\label{monc5}
$F_{3}$ steering is non-monogamous if $\mathcal{C}_{ij}^{2} > \frac{4}{9}$ ($ i \neq j$ and $i, j = A, B, C$) for any two pairs of $i$, $j$.
\end{corollary}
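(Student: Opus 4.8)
The plan is to obtain the statement as an immediate double application of the sufficient steerability criterion of Theorem \ref{mont4}, namely $\mathcal{C}_{ij}^{2} > \frac{4}{9} \Rightarrow S_{ij} > 1$. Recall the operational meaning of non-monogamy fixed in Theorem \ref{mont1} and its surrounding discussion: $F_{3}$ steering is non-monogamous precisely when two reduced states sharing a common observer both violate the $F_{3}$ inequality, i.e.\ $S > 1$ for two index pairs meeting at one party. Hence the entire task reduces to producing two such steerable reduced states from the hypothesis.

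First I would fix the two index pairs singled out by the assumption, say $\mathcal{C}^{2} > \frac{4}{9}$ for $(i,j)$ and for $(i',j')$. By Theorem \ref{mont4} each of these forces $S_{ij} > 1$ and $S_{i'j'} > 1$, so both corresponding reduced states are $F_{3}$ steerable. It then remains to note that these two pairs necessarily involve a common party: any two distinct two-element subsets of $\{A,B,C\}$ intersect, since two disjoint pairs would require four labels out of only three. Consequently the monogamy constraint with respect to the shared observer --- that at most one of the two reduced states containing that observer is $F_{3}$ steerable --- fails, which is exactly what non-monogamy asserts.

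Finally I would append a short consistency check so that the hypothesis is seen to be non-vacuous. From the distribution law Eq.(\ref{mon100}), if two squared concurrences exceed $\frac{4}{9}$ then their sum exceeds $\frac{8}{9}$, forcing the third to obey $\mathcal{C}^{2} \leq \frac{4}{3} - \frac{8}{9} = \frac{4}{9}$; thus the two hypotheses are mutually compatible and merely pin down the remaining pair. The state $|\psi_{ABC}\rangle$ of Eq.(\ref{mon21}), with $\mathcal{C}_{AC}^{2} = \mathcal{C}_{BC}^{2} = \frac{1}{2} > \frac{4}{9}$, realizes precisely this situation. The argument presents no genuine technical obstacle, being a specialization of Theorem \ref{mont4} used twice; the only point deserving care is the elementary combinatorial remark that the two steerable pairs always share a party, since it is this overlap that upgrades ``two reduced states are steerable'' into a genuine violation of monogamy with respect to a fixed observer.
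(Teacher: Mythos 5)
Your proposal is correct and follows the same route the paper takes: the paper derives Corollary \ref{monc5} directly from Theorem \ref{mont4} (the criterion $\mathcal{C}_{ij}^{2} > \frac{4}{9} \Rightarrow S_{ij} > 1$), applied to two index pairs, which necessarily share a common observer since only three parties are available. Your added consistency check via Eq.~(\ref{mon100}) and the example $|\psi_{ABC}\rangle$ are sensible embellishments but do not change the argument.
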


Now, we discuss how the $F_{3}$ inequality violation by the reduced bipartite states depends on the genuine entanglement of the three qubit state. As shown in Sec.(\ref{monsec1}), maximum two bipartite reduced states of $\rho_{ABC}$ can violate the $F_{3}$ inequality, so the bipartite steering of $\rho_{ABC} $ implies that it comes from one component of either this triple $(S_{AB}, S_{AC}, S_{BC})$ or $((S_{AB}, S_{AC}),(S_{AB}, S_{BC}),(S_{AC},S_{BC}))$. Considering both the possibilities, we adopt two different measures: $S^{\max}(\rho_{ABC})$  and $S_{total}^{max}(\rho_{ABC})$, where
$S^{\max}(\rho_{ABC}) = \max{\{S_{AB},S_{AC},S_{BC}\}}$ and $S_{total}^{\max}(\rho_{ABC}) = \max{\{S_{AB}+ S_{AC}, S_{AB} + S_{BC}, S_{AC}+ S_{BC}\}}$.\\
In each case, we will now derive a relation with tripartite entanglement of $\rho_{ABC}$.
\begin{theorem}\label{mont5}
For an arbitrary three qubit state $\rho_{ABC}$, the three tangle $\tau(\rho_{ABC})$ and maximum bipartite steering($S^{\max}(\rho_{ABC})$) with respect to $F_{3}$ inequality obeys the following complementary relation :
\begin{equation}\label{mon29}
S^{max}(\rho_{ABC}) + 2 \tau(\rho_{ABC}) \leq 3.
\end{equation}
\end{theorem}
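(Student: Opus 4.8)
The plan is to prove the inequality first for pure states by a direct algebraic cancellation in terms of the reduced concurrences, and then lift it to an arbitrary mixed $\rho_{ABC}$ using the convex-roof definition of the three-tangle together with the convexity of $S_{ij}$ that was already exploited in the proof of Theorem~\ref{mont1} (the step leading to Eq.(\ref{mon19})).

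For the pure-state case I would work entirely with the squared concurrences. Since both $S^{\max}$ and $\tau$ are invariant under permutations of $A,B,C$, it suffices to handle each pair, and I would take advantage of the fact that $\tau$ admits three equivalent representations in Eq.(\ref{mon26}). Taking the dominant pair to be $AB$, I would combine the expression $S_{AB} = 1 + 2\mathcal{C}_{AB}^{2} - \mathcal{C}_{AC}^{2} - \mathcal{C}_{BC}^{2}$ from Eq.(\ref{mon99}) with the representation $\tau = 1 - \vec{a}^{2} - \mathcal{C}_{AB}^{2} - \mathcal{C}_{AC}^{2}$, chosen precisely so that the $\mathcal{C}_{AB}^{2}$ contributions cancel. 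The key step is then the one-line identity
\begin{equation*}
S_{AB} + 2\tau = 3 - 3\,\mathcal{C}_{AC}^{2} - \mathcal{C}_{BC}^{2} - 2\,\vec{a}^{2} \le 3,
\end{equation*}
where the bound follows because every subtracted term is nonnegative. Matching the analogous representation of $\tau$ to each of the other two pairs gives the same cancellation, so in fact $S_{ij} + 2\tau \le 3$ holds for \emph{each} pair individually on pure states.

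To pass to a general state I would fix an optimal pure-state decomposition $\rho_{ABC} = \sum_n p_n \ketbra{\psi_n}$ realising the convex roof $\tau(\rho_{ABC}) = \sum_n p_n \tau(\psi_n)$. Whichever pair, say $AB$, attains $S^{\max}(\rho_{ABC})$, the quantity $S_{AB} = \trace[T_{AB}^{T} T_{AB}]$ is convex in the state because its correlation matrix is linear in $\rho_{ABC}$; hence $S_{AB}(\rho_{ABC}) \le \sum_n p_n S_{AB}(\psi_n)$, exactly the inequality used to obtain Eq.(\ref{mon19}). Feeding in the pure-state bound $S_{AB}(\psi_n) \le 3 - 2\tau(\psi_n)$ then yields
\begin{equation*}
S^{\max}(\rho_{ABC}) \le \sum_n p_n\bigl(3 - 2\tau(\psi_n)\bigr) = 3 - 2\tau(\rho_{ABC}),
\end{equation*}
which is the assertion of the theorem.

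I expect the pure-state algebra to be routine once the correct representation of $\tau$ is paired with the dominant $S_{ij}$; the only delicate point is the mixed-state step, where one must use \emph{the same} decomposition that is optimal for the convex roof of $\tau$ and check that the two inequalities combine consistently — the convexity of $S_{ij}$ supplies an \emph{upper} bound on the reduced quantity, while the convex roof supplies the matching \emph{lower} expression for $\tau$, so the signs align to give the desired complementarity. A secondary point worth stating explicitly is that the per-pair pure-state bound must be established for all three pairs (not merely the one attaining the max), since for the mixed state the maximising pair need not coincide with the one used in any single pure component.
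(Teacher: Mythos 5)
Your proposal is correct and follows essentially the same route as the paper: an exact pure-state identity showing $S_{ij}+2\tau = 3 - (\text{nonnegative terms})$ for each pair (your concurrence-based form and the paper's Bloch-vector form, obtained from Eq.(\ref{mon15}) and the third equality of Eq.(\ref{mon26}), are the same identity via Eq.(\ref{mon27})), followed by a lift to mixed states using convexity. Your handling of the mixed-state step is in fact more explicit than the paper's one-line appeal to the convexity of $\tau$ and $S^{\max}$, and your closing remark that the per-pair bound is needed for all three pairs is a correct and worthwhile clarification.
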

\begin{proof}
Note that for pure three qubit state Eq.(\ref{mon15}) provides $S_{AB} = 1 + 2 \vec{c}^{2} - \vec{a}^{2} - \vec{b}^{2}$.
Incorporating this with the third equality of three tangle in Eq.(\ref{mon26}), we obtain
\begin{align}\label{mon30}
 S_{AB} + 2 \tau(\rho_{ABC}) &= 3 - \vec{a}^{2} - \vec{b}^{2} - 2 \mathcal{C}_{AC}^{2} - 2 \mathcal{C}_{BC}^{2} \nonumber\\
& \leq 3
\end{align}
Similarly, one has $S_{AC} + 2 \tau(\rho_{ABC}) \leq 3 $ and $S_{BC} + 2 \tau(\rho_{ABC}) \leq 3$.
Hence for pure state $S^{max}(\rho_{ABC}) + 2 \tau(\rho_{ABC}) \leq 3.$
As the three tangle $\tau$ and $S^{max}(\rho_{ABC})$ both are convex under mixing, it implies that the relation in Eq.(\ref{mon29}) holds for all three qubit states.
\end{proof}
This complementary relation suggests that $F_{3}$ inequality violation by the reduced bipartite states depends on the tripartite entanglement present in the tripartite system.
We determine a class of three qubit genuinely entangled states which saturates the above-mentioned relation. This single parameter class of states is given by $|\phi_{m}\rangle = \frac{|000\rangle +m(|101\rangle + |010\rangle) + |111\rangle}{\sqrt{2 + 2m^{2}}}$, where $m \in [0,1]$. The above class of states has been identified in \cite{Nep} as the maximum dense-coding capable class of states. For this class of states, $S^{max}(|\phi_{m}\rangle)= 1 + \frac{8 m^{2}}{(1+m^{2})^{2}}$ and $\tau(|\phi_{m}\rangle) = 1 - \frac{4 m^{2}}{(1+m^{2})^{2}}$.  Hence, for this class of states, one can show the follwing relation :  $S^{max}(|\phi_{m}\rangle) + 2 \tau(|\phi_{m}\rangle) = 3$.

\begin{theorem}\label{mont6}
For an arbitrary three qubit state $\rho_{ABC}$, the three tangle $\tau(\rho_{ABC})$ and maximum bipartite steering($S^{\max}_{total}(\rho_{ABC})$) satisfy the following complementary relation:
\begin{equation}\label{mon80}
S^{max}_{total}(\rho_{ABC}) + \tau(\rho_{ABC}) \leq 3.
\end{equation}
\end{theorem}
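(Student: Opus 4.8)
The plan is to follow the two-stage strategy of the proof of Theorem~\ref{mont5}: first settle the bound Eq.(\ref{mon80}) for pure three qubit states by direct computation, and then lift it to arbitrary mixed states by convexity. The decisive simplification in the pure case is the sum rule Eq.(\ref{mon18}), $S_{AB}+S_{AC}+S_{BC}=3$, which recasts every pairwise sum entering $S^{\max}_{total}$ as a complement: $S_{AB}+S_{AC}=3-S_{BC}$ and its permutations. Consequently, for a pure state,
\begin{equation*}
S^{\max}_{total}(\rho_{ABC}) = 3 - \min\{S_{AB},S_{AC},S_{BC}\},
\end{equation*}
so that Eq.(\ref{mon80}) reduces to the single estimate $\tau(\rho_{ABC}) \leq \min\{S_{AB},S_{AC},S_{BC}\}$.

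Next I would prove $\tau \leq S_{ij}$ for each pair by combining the concurrence form of the reduced steering, Eq.(\ref{mon99}), with the three equivalent expressions for the three tangle in Eq.(\ref{mon26}). Choosing the tangle expression that shares its two concurrence terms with a given permutation of Eq.(\ref{mon99}) makes those terms cancel; for example, subtracting the first line of Eq.(\ref{mon26}) from $S_{BC}=1+2\mathcal{C}_{BC}^{2}-\mathcal{C}_{AB}^{2}-\mathcal{C}_{AC}^{2}$ gives
\begin{equation*}
S_{BC}-\tau = 2\,\mathcal{C}_{BC}^{2} + \vec{a}^{2} \geq 0.
\end{equation*}
The matching pairings yield $S_{AB}-\tau=2\,\mathcal{C}_{AB}^{2}+\vec{c}^{2}$ and $S_{AC}-\tau=2\,\mathcal{C}_{AC}^{2}+\vec{b}^{2}$, both manifestly nonnegative. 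Hence $\tau\leq\min\{S_{AB},S_{AC},S_{BC}\}$, and the displayed identity for $S^{\max}_{total}$ immediately delivers $S^{\max}_{total}+\tau\leq 3$ on pure states.

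To reach arbitrary $\rho_{ABC}$ I would invoke convexity under mixing, exactly as in Theorem~\ref{mont5}. Each $S_{ij}=\mathrm{Tr}[T_{ij}^{T}T_{ij}]$ is the squared Frobenius norm of a correlation matrix that depends linearly on $\rho_{ABC}$, hence is convex; a finite maximum of sums of convex functions is convex, so $S^{\max}_{total}$ is convex, while $\tau$ is convex through its convex-roof definition. A convex function on the compact convex set of density matrices attains its maximum at an extreme point, and the extreme points are precisely the pure states, so the pure-state bound propagates to every state. The step I expect to demand the most care is this extension: one must either justify the maximum-at-extreme-points principle for the nonsmooth function $S^{\max}_{total}$, or argue directly by evaluating both quantities on an optimal convex-roof decomposition of $\tau$, where $\tau(\rho_{ABC})=\sum_{n}p_{n}\tau(\psi_{n})$, and bounding $S^{\max}_{total}(\rho_{ABC})\leq\sum_{n}p_{n}S^{\max}_{total}(\psi_{n})$ term by term. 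By comparison, the pure-state algebra is routine once the correct pairing of Eqs.(\ref{mon26}) and (\ref{mon99}) is selected.
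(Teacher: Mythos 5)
Your proposal is correct and follows essentially the same route as the paper: a pure-state identity derived from Eqs.~(\ref{mon15})--(\ref{mon17}) (equivalently Eq.~(\ref{mon99})) together with the three expressions for $\tau$ in Eq.~(\ref{mon26}), followed by convexity of $S^{\max}_{total}+\tau$ to reach mixed states. Your repackaging via $S^{\max}_{total}=3-\min\{S_{AB},S_{AC},S_{BC}\}$ and $S_{ij}-\tau=2\,\mathcal{C}_{ij}^{2}+(\text{Bloch vector})^{2}\geq 0$ is algebraically identical to the paper's computation $S_{AB}+S_{AC}+\tau=3-\vec{a}^{2}-2\,\mathcal{C}_{BC}^{2}\leq 3$, merely rearranged through the pure-state sum rule Eq.~(\ref{mon18}).
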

\begin{proof}
Combining  Eqs.(\ref{mon15},\ref{mon16})  and last two equalities of Eq.(\ref{mon26}), we get:
\begin{align*}
 S_{AB} + S_{AC} + 2 \tau(\rho_{ABC}) & = 4 - 2 \vec{a}^{2} - \mathcal{C}_{AC}^{2} -  \mathcal{C}_{AC}^{2} - 2\, \mathcal{C}_{BC}^{2} \nonumber\\
&= 3 + \tau - \vec{a}^{2} - 2\, \mathcal{C}_{BC}^{2}.\nonumber\\
\end{align*}
Thus,
\begin{equation}\label{mon31}
S_{AB} + S_{AC} + \tau(\rho{ABC}) \leq 3.
\end{equation}
Considering all permutation of parties, we get $S_{AB} + S_{BC} + \tau(\rho{ABC}) \leq 3 $ and $S_{AC}  + S_{BC} +  \tau(\rho_{ABC}) \leq 3$.\\
Now, by using the convexity property of the left hand sides of these inequalities, we claim that the relation(Eq.(\ref{mon29})) holds for all three-qubit states.
\end{proof}
We have identified a class of genuinely entangled states which saturates the afore-mentioned relation. This class of states is given by $|\phi_{q}\rangle = \frac{1}{\sqrt{2}}|000\rangle + \sqrt{\frac{1}{2}-q^{2}} |101\rangle + q |111\rangle$ where $q \in  (0, \frac{1}{\sqrt{2}})$.
For $|\phi_{q}\rangle$, $S_{total}^{max} = 3 - 2q^{2}$ and  $\tau = 2 q^{2}$. Hence, $S^{max}_{total}(\rho_{ABC}) + \tau(\rho_{ABC})= 3$. However, $|\phi_{q}\rangle$ has only one reduced state which violate $F_{3}$ inequality. Since, among all pure three qubit GHZ class of  states only star shaped states can have two reduced steerable states (see Appendix \ref{Classi}) and for this class of states, $S^{max}_{total}(\rho_{ABC}) + \tau(\rho_{ABC}) < 3$, so there is no three qubit pure state with $\tau \neq 0$ having two reduced bipartite steerable states which saturates the above inequality.\\
All the above-mentioned relations are obtained with respect to three tangle. However, three tangle is not a good measure of genuine tripartite entanglement even for pure states as there exists a large number of pure states ($W$-like states\cite{Dur}) for which it becomes zero. Hence, none of the relations are meaningful for those $W$-like states.\\
To obtain such relations for $W$-like states, we consider the measure for  $W$ entanglement introduced by Dur et. al.  \cite{Dur}, defined as
$E_{W} = \min\{\mathcal{C}_{AB}^{2},\mathcal{C}_{AC}^{2},\mathcal{C}_{BC}^{2}\}$.
Any pure state $\rho_{ABC}$ contains $W$ entanglement if $E_{W} > 0$. The $W$ entanglement $E_{W}$ achieves its maximum value $\frac{4}{9}$ in the $|W\rangle$ state.
 \begin{theorem}\label{mont7}
For an arbitrary three qubit  pure state $|\phi_{ABC}\rangle$, the $W$ entanglement $(E_{W})$ and maximum bipartite steering($S^{\max}_{total}(\rho_{ABC})$) satisfies the following complementary relation:
\begin{equation}\label{mon32}
S^{max}_{total}(|\phi_{ABC}\rangle) + 3 E_{W}(|\phi_{ABC}\rangle) \leq \frac{10}{3}.
\end{equation}
\end{theorem}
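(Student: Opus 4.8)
The plan is to push everything onto the three squared concurrences $\mathcal{C}_{AB}^{2}$, $\mathcal{C}_{AC}^{2}$, $\mathcal{C}_{BC}^{2}$, exactly as in the earlier theorems, and then close the argument with the Coffman--Kundu--Wootters shareability bound Eq.(\ref{mon100}). Because the statement is confined to pure states, each $S_{ij}$ already has the closed form Eq.(\ref{mon99}), so no convexity extension is needed and the whole proof is purely algebraic.

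First I would substitute Eq.(\ref{mon99}) and its two permutations into the three pairwise sums that define $S^{\max}_{total}$. Writing $\Sigma = \mathcal{C}_{AB}^{2} + \mathcal{C}_{AC}^{2} + \mathcal{C}_{BC}^{2}$, a direct computation gives
\begin{align*}
S_{AB} + S_{AC} &= 2 + \Sigma - 3\,\mathcal{C}_{BC}^{2}, \\
S_{AB} + S_{BC} &= 2 + \Sigma - 3\,\mathcal{C}_{AC}^{2}, \\
S_{AC} + S_{BC} &= 2 + \Sigma - 3\,\mathcal{C}_{AB}^{2},
\end{align*}
so that in each sum the squared concurrence carrying the coefficient $-3$ is exactly that of the pair \emph{absent} from the sum.

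The decisive step is then the maximization: since each pairwise sum equals $2+\Sigma$ minus three times one of the squared concurrences, taking the maximum amounts to subtracting three times the \emph{smallest} of them, i.e.
\begin{equation*}
S^{\max}_{total}(|\phi_{ABC}\rangle) = 2 + \Sigma - 3\min\{\mathcal{C}_{AB}^{2},\mathcal{C}_{AC}^{2},\mathcal{C}_{BC}^{2}\} = 2 + \Sigma - 3\,E_{W}(|\phi_{ABC}\rangle).
\end{equation*}
Rearranging yields the exact identity $S^{\max}_{total} + 3E_{W} = 2 + \Sigma$, and the shareability constraint $\Sigma \leq \frac{4}{3}$ from Eq.(\ref{mon100}) immediately delivers $S^{\max}_{total} + 3E_{W} \leq 2 + \frac{4}{3} = \frac{10}{3}$. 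I do not anticipate a serious obstacle; the only place demanding care is the bookkeeping in the maximization step, where one must verify that the pair receiving the $-3$ coefficient is precisely the one whose minimum defines $E_{W}$, so that the two contributions combine cleanly. Once that matching is confirmed, the bound is a one-line consequence of Eq.(\ref{mon100}), and unlike Theorems \ref{mont5} and \ref{mont6} no separate argument for mixed states is required since the claim is stated only for pure states.
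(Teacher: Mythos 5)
Your proposal is correct and follows essentially the same route as the paper: both substitute Eq.~(\ref{mon99}) and its permutations into the pairwise sums and close with the shareability bound Eq.~(\ref{mon100}). The only (harmless) refinement is that you make explicit the exact identity $S^{\max}_{total}+3E_{W}=2+\Sigma$, whereas the paper simply records the three inequalities $S_{ij}+S_{ik}+3\,\mathcal{C}_{jk}^{2}\leq\frac{10}{3}$ and concludes from them.
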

\begin{proof}
Using Eq.(\ref{mon99}) and its permutations, we have
\begin{equation}\label{mon33}
S_{AB} + S_{AC} + 3\, \mathcal{C}_{BC}^{2} = 2 + \mathcal{C}_{AB}^{2} + \mathcal{C}_{AC}^{2} + \mathcal{C}_{BC}^{2}.
\end{equation}
If one uses Eq.(\ref{mon100}), the above equality immediately leads to
\begin{equation}\label{mon34}
S_{AB} + S_{AC} + 3 \, \mathcal{C}_{BC}^{2} \leq \frac{10}{3}.
\end{equation}
Similarly, permutation of parties gives $S_{AB} + S_{BC} + 3\, \mathcal{C}_{AC}^{2} \leq \frac{10}{3}$, and $S_{AC} + S_{BC} + 3 \, \mathcal{C}_{AB}^{2} \leq \frac{10}{3}$.
The above equations confirm the validity of the claim made in Eq.(\ref{mon32}).
\end{proof}
This relation imposes a restriction on the bipartite steering for a given amount of $W$ entanglement and it is saturated by $|W\rangle$ state. \\
We have also investigated such complementary relations for bipartite nonlocality(with respect to Bell-CHSH violation), bipartite steering and three tangle. Following the same procedure as before, a similar trade-off relation can be obtained for them:

\begin{equation}\label{mon34}
S^{\max}(\rho_{ABC}) + \mathcal{M}^{\max}(\rho_{ABC}) + 3 \,\tau(\rho_{ABC}) \leq 5
\end{equation}
where $\mathcal{M}^{\max}(\rho_{ABC}) = \max \{\mathcal{M}_{AB}, \mathcal{M}_{AC}, \mathcal{M}_{BC}\}$ and $\mathcal{M} = u_{1}^{2} + u_{2}^{2}$ is the Horodecki parameter \cite{Hro} used for measuring the degree of Bell-CHSH violation. $u_{1}^{2}, u_{2}^{2}$ are being the largest two eigen values of $T^{T}_{AB} T_{AB}$.

\section{Complementary relations for local and nonlocal information contents}
Total information content of a three qubit state can be divided into two forms: local and nonlocal information contents. Local information can be defined as \cite{Zha} :
\begin{equation}\label{mon35}
I_{local} = \vec{a}^{2} + \vec{b}^{2} + \vec{c}^{2}.
\end{equation}
To derive the complementary relation between local and nonlocal information contents, we consider only bipartite nonlocal information present in the three qubit state. Bipartite nonlocal information content can be defined as,
\begin{equation}\label{mon36}
I_{nonlocal} = \max\{N_{AB} + N_{AC}, N_{AB} + N_{BC}, N_{AC} + N_{BC}\}
\end{equation}
where $N_{ij} = \max\{0, S_{ij} - 1\}$, $i \neq j$ and $i, j = A, B, C$, quantifies the amount of $F_{3}$ inequality violation and hence the steering nonlocal correlations of the two qubit state $\rho_{ij}$ .
\begin{theorem}
For an arbitrary three qubit state $\rho_{ABC}$,
\begin{equation}\label{mon37}
I_{local} + I_{nonlocal} \leq 3.
\end{equation}
\end{theorem}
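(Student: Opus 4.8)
The plan is to prove the bound first for pure three-qubit states and then lift it to arbitrary mixed states by a convexity argument, exactly in the spirit of Theorems \ref{mont5} and \ref{mont6}. What makes the lifting step immediate is that \emph{both} quantities are manifestly convex functions of $\rho_{ABC}$: the local Bloch vectors $\vec a,\vec b,\vec c$ and the correlation matrices $T_{ij}$ are linear in $\rho_{ABC}$, so $I_{local}$ (a sum of squared norms) and each $S_{ij}=\mathrm{Tr}[T_{ij}^{T}T_{ij}]$ are convex; the clipping $N_{ij}=\max\{0,S_{ij}-1\}$ and the outer maximum defining $I_{nonlocal}$ in Eq.(\ref{mon36}) preserve convexity. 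Hence, writing $\rho_{ABC}=\sum_{n}p_{n}\ketbra{\psi_{n}}$, one has $I_{local}(\rho_{ABC})+I_{nonlocal}(\rho_{ABC})\le\sum_{n}p_{n}\big(I_{local}(\psi_{n})+I_{nonlocal}(\psi_{n})\big)$, so the statement reduces to the pure-state case.

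For a pure state I would translate everything into concurrences and the three-tangle. Using Eq.(\ref{mon99}) and its permutations, $N_{ij}=\max\{0,\,2\mathcal{C}_{ij}^{2}-\mathcal{C}_{ik}^{2}-\mathcal{C}_{jk}^{2}\}$, while summing the three equalities for $\tau$ in Eq.(\ref{mon26}) yields the compact identity $I_{local}=\vec a^{2}+\vec b^{2}+\vec c^{2}=3-3\tau-2(\mathcal{C}_{AB}^{2}+\mathcal{C}_{AC}^{2}+\mathcal{C}_{BC}^{2})$. Since $I_{nonlocal}$ is the maximum over the three pairwise sums, it suffices to bound one representative, say $I_{local}+N_{AB}+N_{AC}$, and then invoke symmetry under permutation of the parties.

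The clean step I would use to avoid a proliferation of cases coming from the clipping is the elementary bound $N_{ij}\le 2\mathcal{C}_{ij}^{2}$, valid because the two subtracted squared concurrences are non-negative while $2\mathcal{C}_{ij}^{2}\ge 0$. Then $I_{local}+N_{AB}+N_{AC}\le I_{local}+2\mathcal{C}_{AB}^{2}+2\mathcal{C}_{AC}^{2}=3-3\tau-2\mathcal{C}_{BC}^{2}\le 3$, using $\tau\ge 0$ and $\mathcal{C}_{BC}^{2}\ge 0$. The analogous inequalities for the other two pairs follow by relabelling, and taking the maximum gives $I_{local}+I_{nonlocal}\le 3$ for every pure state, which the convexity argument above extends to all $\rho_{ABC}$.

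The main obstacle I anticipate is bookkeeping rather than conceptual: the inner $\max\{0,\cdot\}$ in each $N_{ij}$ together with the outer maximum in $I_{nonlocal}$ tempts one into a case analysis over which reduced states are $F_{3}$ steerable, but the single inequality $N_{ij}\le 2\mathcal{C}_{ij}^{2}$ collapses all of it into one line. It is worth noting that, unlike Theorem \ref{mont7}, this argument does not require the concurrence monogamy inequality Eq.(\ref{mon100}); only the non-negativity of $\tau$ and of the squared concurrences is used, which is also what pins the right-hand side at exactly $3$.
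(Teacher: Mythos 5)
Your proof is correct, and it shares the paper's overall skeleton (establish the bound for pure states, then lift by convexity of $I_{local}+I_{nonlocal}$ under mixing), but the pure-state step is genuinely different. The paper works directly with Bloch vectors: it computes $I_{local}+(S_{AB}-1)+(S_{AC}-1)=2(\vec b^{2}+\vec c^{2})-\vec a^{2}$ and bounds this by $3$ using the Higuchi--Sudbery--Szulc polygon inequality $\vec b^{2}+\vec c^{2}\le 1+\vec a^{2}$ together with $\vec a^{2}\le 1$, and then disposes of the clipping only by remarking that $I_{local}\le 3$ when both $N$'s vanish. You instead rewrite everything in terms of concurrences and the three-tangle via Eqs.~(\ref{mon99}) and (\ref{mon26}), and the single inequality $N_{ij}\le 2\mathcal{C}_{ij}^{2}$ combined with the identity $I_{local}=3-3\tau-2(\mathcal{C}_{AB}^{2}+\mathcal{C}_{AC}^{2}+\mathcal{C}_{BC}^{2})$ gives $I_{local}+N_{AB}+N_{AC}\le 3-3\tau-2\mathcal{C}_{BC}^{2}\le 3$ in one line. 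What your route buys is a uniformly clean treatment of the $\max\{0,\cdot\}$ clipping, covering in particular the case where exactly one of $S_{AB}-1$, $S_{AC}-1$ is negative, which the paper's argument leaves implicit (it happens to work out there too, since $I_{local}+S_{AB}-1=3\vec c^{2}\le 3$, but this is never stated). What it costs is reliance on $\tau\ge 0$, i.e.\ the Coffman--Kundu--Wootters monogamy inequality, in place of the polygon inequality of \cite{Hig}; both are cited, non-trivial external inputs, so neither proof is more self-contained than the other. Your closing remark that Eq.~(\ref{mon100}) is not needed is accurate, though you should note that positivity of $\tau$ is itself a monogamy-type fact rather than an elementary one.
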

\begin{proof}
For pure three qubit states, it is straightforward to check that
\begin{align}\label{mon38}
I_{local} + (S_{AB} -1) + (S_{AC} - 1) &= 2(\vec{b}^{2} + \vec{c}^{2})- \vec{a}^{2}\nonumber \\
       &\leq 2(1+\vec{a}^{2}) - \vec{a}^{2}\nonumber \\
       & \leq 3
\end{align}
where, in the first inequality we have used the fact that relation $\vec{b}^{2} + \vec{c}^{2} \leq 1 + \vec{a}^{2}$ holds for all pure three qubit states \cite{Hig}.
Since $I_{local} \leq 3$, the above inequality(Eq.(\ref{mon38})) also holds when both of $N_{AB}$ and $N_{AC}$ are equal to zero. Hence $I_{local} + N_{AB} + N_{AC} \leq 3$. Similarly, one gets $I_{local} + N_{AB} + N_{BC} \leq 3$ and $I_{local} + N_{AC} + N_{BC} \leq 3$.
Note that the left hand sides of these inequalities are convex under mixing. This confirms the relation presented in Eq.(\ref{mon37}).
\end{proof}

The above trade-off relation links between local information and bipartite steering. One can easily show that $I_{local} = 3$, and $I_{nonlocal} = 0$ for the product state. On the other hand, in order to exist bipartite steering, $I_{local}$ must be less than $3$. For $|\psi_{ABC}\rangle$(Eq.(\ref{mon21})), $I_{local} = \frac{1}{2}$,  $I_{nonlocal} = 2 + \frac{1}{2}$ and it is the state which saturates this trade-off.
In this context, it may be noted that to get larger violation of $F_{3}$ inequality(characterizing larger amount of steering), the amount of local information content must be reduced. This fact will be confirmed in the next section, where we will show that amount of local information content must be less than one for any three qubit pure state to have two $F_{3}$ steerable bipartite reduced states.
\section{Entanglement detection}\label{monsec2}
We now illustrate the relevance of the above results with
some applications. By using the shareability relations, we will derive criteria of detecting different types of tripartite entanglement.
\begin{theorem}\label{mont2}
For any three qubit pure state $|\phi_{ABC}\rangle$$ \in$$ \mathbb{H}^{A} \otimes \mathbb{H}^{B} \otimes \mathbb{H}^{C}$, if at least one of the following conditions holds:
\begin{equation}\label{mon22}
(i)\vec{a}^{2} \neq \frac{\vec{b}^{2} + \vec{c}^{2}}{2},
(ii)\vec{b}^{2} \neq \frac{\vec{a}^{2} + \vec{c}^{2}}{2},
(iii)\vec{c}^{2} \neq \frac{\vec{a}^{2} + \vec{b}^{2}}{2}
\end{equation}
then the state is entangled.
\end{theorem}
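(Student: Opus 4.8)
The plan is to prove the contrapositive: I would show that whenever $|\phi_{ABC}\rangle$ is \emph{not} entangled, i.e. fully separable, every one of the three relations in Eq.(\ref{mon22}) holds as an equality, so that none of the conditions (i)-(iii) can be met. The theorem then follows immediately, since if any one of (i)-(iii) holds the state cannot be fully separable and is therefore entangled.

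First I would write a fully separable pure three-qubit state in product form $|\phi_{ABC}\rangle = |\phi_A\rangle \otimes |\phi_B\rangle \otimes |\phi_C\rangle$, so that each single-party reduced state $\rho_A,\rho_B,\rho_C$ is a pure single-qubit density matrix. Next I would invoke the purity relations already recorded in Eq.(\ref{mon13}) and Eq.(\ref{mon14}), namely $\mathrm{Tr}(\rho_A^2)=\tfrac{1+\vec{a}^2}{2}$ together with its permutations. Purity of $\rho_A$ means $\mathrm{Tr}(\rho_A^2)=1$, which forces $\vec{a}^2=1$, and identically $\vec{b}^2=\vec{c}^2=1$. Substituting $\vec{a}^2=\vec{b}^2=\vec{c}^2=1$ into Eq.(\ref{mon22}) makes each relation read $1=\tfrac{1+1}{2}$, so all three hold. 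Equivalently, recalling the pure-state expressions Eqs.(\ref{mon15})-(\ref{mon17}), conditions (i), (ii), (iii) are precisely the statements $S_{BC}\neq 1$, $S_{AC}\neq 1$, $S_{AB}\neq 1$, and a product state gives $S_{AB}=S_{AC}=S_{BC}=1$. Hence no condition in Eq.(\ref{mon22}) can hold for a separable state, and taking the contrapositive delivers the theorem.

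There is no serious computational obstacle here; the only point demanding care is the logical direction, so that is where I would focus. The conditions are \emph{sufficient but not necessary} for entanglement: negating all three of (i)-(iii) simultaneously only yields $\vec{a}^2=\vec{b}^2=\vec{c}^2$ (equal, but not necessarily unity), as one sees by subtracting the three equalities pairwise and using $3\vec{a}^2=3\vec{b}^2$ etc. Thus entangled states with equal but sub-unit Bloch-vector norms, such as $|W\rangle$, legitimately evade this criterion. I would therefore prove only the stated implication via the route \emph{full separability $\Rightarrow$ unit local Bloch vectors $\Rightarrow$ all equalities}, and explicitly refrain from claiming any converse.
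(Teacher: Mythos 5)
Your proof is correct, but it takes a genuinely different route from the paper's. The paper argues in the forward direction through steering: using Eqs.~(\ref{mon15})--(\ref{mon17}) it observes that, for example, $\vec{a}^{2} > \frac{\vec{b}^{2}+\vec{c}^{2}}{2}$ is equivalent to $S_{BC}>1$, while $\vec{a}^{2} < \frac{\vec{b}^{2}+\vec{c}^{2}}{2}$ is equivalent to $S_{AB}+S_{AC}>2$ and hence forces at least one of $S_{AB},S_{AC}$ above $1$; since a separable state has separable reductions obeying $S_{ij}\leq 1$, either strict inequality certifies entanglement, and the two cases together give the ``$\neq$'' form of Eq.~(\ref{mon22}). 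Your contrapositive bypasses the steering machinery entirely: full separability makes each one-qubit marginal pure, Eq.~(\ref{mon13}) and its permutations then force $\vec{a}^{2}=\vec{b}^{2}=\vec{c}^{2}=1$, and all three relations in Eq.~(\ref{mon22}) collapse to the identity $1=\frac{1+1}{2}$. Your version is more elementary, needs only the purity of the marginals rather than the fact that separable states satisfy the $F_{3}$ inequality, and dispatches both directions of each inequality at once; the paper's version is longer but yields strictly more information --- each condition in Eq.~(\ref{mon22}) actually guarantees an $F_{3}$-steerable reduced bipartite state, so the entanglement is certified in the one-sided device-independent spirit the paper is promoting, and the intermediate facts ($S_{ij}>1$ or $S_{ij}+S_{ik}>2$) are reused in the surrounding discussion. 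Your closing caveat that the criterion is only sufficient, with the simultaneous negation of all three conditions amounting only to $\vec{a}^{2}=\vec{b}^{2}=\vec{c}^{2}$ rather than to full separability, matches the paper's own remark about the $|W\rangle$ state immediately after the theorem.
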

\begin{proof}
Let $|\phi_{ABC}\rangle$ be a separable state, then all bipartite reduced states are also separable and $S_{AB}, S_{AC}, S_{BC} \leq 1.$ Hence violation of $F_{3}$ inequality by any bipartite reduced state entails entanglement of $|\phi_{ABC}\rangle$.
It is clear from Eq.(\ref{mon15})-Eq.(\ref{mon17} that if  $S_{AB}$, $S_{AC}$ and $S_{BC}>1$ then
$\vec{c}^{2} > \frac{\vec{a}^{2} + \vec{b}^{2}}{2}$, $\vec{b}^{2} > \frac{\vec{a}^{2} + \vec{c}^{2}}{2}$ and $\vec{a}^{2} > \frac{\vec{b}^{2} + \vec{c}^{2}}{2}$ hold respectively.
Again by adding Eq.(\ref{mon15}) and Eq.(\ref{mon16}), we have $S_{AB} + S_{AC} = 2 + \vec{b}^{2} +  \vec{c}^{2} - 2 \vec{a}^{2} $. By noting that $S_{AB} + S_{AC}> 2$ implies steerability of at least  one of $\rho_{AB}$ or $\rho_{AC}$,  $|\phi_{ABC} \rangle$ is entangled if $\vec{a}^{2} < \frac{\vec{b}^{2} + \vec{c}^{2}}{2}$. Similarly permutation of the parties gives $\vec{b}^{2} < \frac{\vec{a}^{2} + \vec{c}^{2}}{2}$ and $\vec{c}^{2} < \frac{\vec{a}^{2} + \vec{b}^{2}}{2}$. Combining all these expressions, we arrive at Eq.(\ref{mon22})
\end{proof}
Now one may enquire whether condition (\ref{mon22}) is also necessary for entanglement. Unfortunately, this is not the case. For example, consider the $|W\rangle$ state, which does not satisfy (\ref{mon22}), but is entangled.

\begin{theorem}\label{mont3}
For any three qubit pure state $|\phi_{ABC}\rangle $$\in$$ \mathbb{H}^{A} \otimes \mathbb{H}^{B} \otimes \mathbb{H}^{C}$, if at least one of the following conditions holds:
\begin{align}\label{mon23}
(i) && \vec{a}^{2} > \frac{\vec{b}^{2} + \vec{c}^{2}}{2},
     \vec{b}^{2} > \frac{\vec{a}^{2} + \vec{c}^{2}}{2},\nonumber \\
(ii) && \vec{a}^{2} > \frac{\vec{b}^{2} + \vec{c}^{2}}{2},
      \vec{c}^{2} > \frac{\vec{a}^{2} + \vec{b}^{2}}{2},\nonumber \\
(iii) && \vec{b}^{2} > \frac{\vec{a}^{2} + \vec{c}^{2}}{2},
      \vec{c}^{2} > \frac{\vec{a}^{2} + \vec{b}^{2}}{2} \nonumber \\
\end{align}
then the state is genuinely entangled.
\end{theorem}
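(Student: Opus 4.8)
The plan is to translate the analytic conditions (\ref{mon23}) into statements about $F_{3}$ steerability of the three bipartite reduced states, and then to invoke the structure of non-genuinely-entangled pure states to reach the conclusion by contraposition.

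First I would recall that, for a pure three qubit state, Eqs.(\ref{mon15})--(\ref{mon17}) express $S_{AB}$, $S_{AC}$, $S_{BC}$ in closed form through the local Bloch vectors, while Eq.(\ref{mon8}) identifies $S_{ij} > 1$ as the $F_{3}$ steerability criterion. Reading these off directly gives $S_{BC} > 1 \Leftrightarrow \vec{a}^{2} > \frac{\vec{b}^{2} + \vec{c}^{2}}{2}$, $S_{AC} > 1 \Leftrightarrow \vec{b}^{2} > \frac{\vec{a}^{2} + \vec{c}^{2}}{2}$, and $S_{AB} > 1 \Leftrightarrow \vec{c}^{2} > \frac{\vec{a}^{2} + \vec{b}^{2}}{2}$. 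Consequently each disjunct of (\ref{mon23}) asserts that a \emph{specific pair} of reduced states is simultaneously $F_{3}$ steerable: condition (i) forces $\rho_{BC}$ and $\rho_{AC}$ steerable, (ii) forces $\rho_{BC}$ and $\rho_{AB}$ steerable, and (iii) forces $\rho_{AC}$ and $\rho_{AB}$ steerable.

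The central step is to show that a pure three qubit state which is \emph{not} genuinely entangled can have at most one $F_{3}$ steerable reduced state. I would argue by cases on the separability type. For a fully product state $|\phi_A\rangle\otimes|\phi_B\rangle\otimes|\phi_C\rangle$ every reduced state is of product form, and for any product two qubit state the correlation matrix factorises, giving $S_{ij} = \vec{x}^{2}\,\vec{y}^{2} \leq 1$, so none is steerable. For a biseparable state, say in the cut $A|BC$ with $|\phi_{ABC}\rangle = |\phi_A\rangle\otimes|\psi_{BC}\rangle$, the two reduced states straddling the isolated party, $\rho_{AB} = |\phi_A\rangle\langle\phi_A|\otimes\rho_B$ and $\rho_{AC} = |\phi_A\rangle\langle\phi_A|\otimes\rho_C$, are again product states and hence unsteerable, leaving only $\rho_{BC}$ as a candidate. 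The same conclusion holds after relabelling for the cuts $B|AC$ and $C|AB$, so in every non-genuinely-entangled case at most one reduced state is $F_{3}$ steerable.

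Combining the two steps yields the theorem by contraposition: each condition in (\ref{mon23}) guarantees that two reduced states are $F_{3}$ steerable, which is impossible for a fully separable or biseparable pure state, so $|\phi_{ABC}\rangle$ must be genuinely entangled. I expect the main obstacle to be the case analysis of the central step — specifically, verifying cleanly that the two reduced states crossing the bipartite cut of a biseparable state are genuinely product states and therefore unsteerable, since the entire implication rests on the fact that $F_{3}$ steerability is strictly stronger than entanglement and cannot survive such a product factorisation.
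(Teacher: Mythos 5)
Your proposal is correct, and its overall logic mirrors the paper's: each disjunct of (\ref{mon23}) is exactly the statement, via Eqs.~(\ref{mon15})--(\ref{mon17}), that two specific reduced states satisfy $S_{ij}>1$, and the theorem follows once one knows that a pure state that is not genuinely entangled admits at most one $F_{3}$-steerable reduced state. Where you differ is in how that key lemma is established. The paper writes the biseparable state in the canonical form $(\cos\theta\,|00\rangle+\sin\theta\,|11\rangle)_{AB}\otimes|0\rangle_{C}$, reads off $\vec{c}^{2}=1$, $\vec{a}^{2}=\vec{b}^{2}$, and computes $S_{AB}=3-2\vec{a}^{2}$, $S_{AC}=S_{BC}=\vec{a}^{2}$ from Eqs.~(\ref{mon15})--(\ref{mon17}), concluding that only the reduced state inside the entangled cut can violate the inequality. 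You instead argue structurally: the two reduced states straddling the cut are product states, and for any product state the correlation matrix factorises as $T=\vec{x}\,\vec{y}^{\,T}$, whence $S=\mathrm{Tr}[T^{T}T]=\vec{x}^{2}\vec{y}^{2}\leq 1$ directly from the definition in Eq.~(\ref{mon8}), without invoking the purity-derived formulas for those particular marginals. Your route is slightly more elementary and self-contained (it also covers the fully separable case explicitly and does not depend on choosing a canonical Schmidt form), while the paper's explicit computation has the side benefit of exhibiting $S_{AB}=3-2\vec{a}^{2}$, i.e.\ identifying precisely which marginal can be steerable and by how much. Both are complete; the "main obstacle" you flag (verifying that the straddling marginals are genuinely product and hence unsteerable) is handled by the one-line trace computation above, so there is no gap.
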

\begin{proof}
Let $|\phi_{ABC}\rangle$ be any bi-separable state in which $AB$ is independent of $C$, then  it can be expressed as $(\cos\theta |00\rangle + \sin\theta |11\rangle)_{AB} \otimes |0\rangle_{C}$ where $0\leq \theta \leq \frac{\pi}{4}$. For this state, $\vec{c}^{2} = 1$ and $\vec{a}^{2} = \vec{b}^{2}$.  Using Eq.(\ref{mon15})-Eq.(\ref{mon17}), one can find that $S_{AB} = 3 -  2 \vec{a}^{2},  S_{AC} = \vec{a}^{2}, S_{BC} = \vec{a}^{2}$. So, only $S_{AB}$ can be greater than $1$. Similarly, one can show that only one reduced state will violate the $F_{3}$ inequality in which another system other than C system factorizes. This immediately leads to a simple sufficient condition for genuinely entangled pure states: Violation of $F_{3}$ inequality by two reduced states indicates genuine entanglement of $|\phi_{ABC}\rangle$. Then, from
Eq.(\ref{mon15})-Eq.(\ref{mon17}), we obtain the conditions (\ref{mon23}).
\end{proof}
It is important to note that for a pure biseparable state $\vec{a}^{2} + \vec{b}^{2} + \vec{c}^{2} \geq 1$ and exactly one of the reduced bipartite states is $F_{3}$ steerable. Therefore, for the existence of two $F_{3}$ steerable bipartite reduced states of a three qubit pure state, $\vec{a}^{2} + \vec{b}^{2} + \vec{c}^{2} < 1$ must hold. This condition can be treated as a necessary condition for a three qubit pure state to have two $F_{3}$ steerable bipartite reduced states. However, this is not sufficient, for example, $\vec{a}^{2} + \vec{b}^{2} + \vec{c}^{2} = \frac{1}{3} < 1$ for $|W\rangle$ state, but no reduced bipartite state of this state is $F_{3}$ steerable.\\
At this stage a pertinent question would be whether there exists any biseparable mixed state which has more than one reduced steerable states. Let us consider the following example:
\begin{align}\label{mon90}
&|\phi_{b}\rangle = \frac{4}{9}(1 + \epsilon) {|\phi^{+}\rangle \langle \phi^{+}|}_{AB} \otimes |0\rangle \langle 0|_{C} +\frac{4}{9}(1 + \epsilon) \times     \nonumber \\
 & {|\phi^{+}\rangle \langle \phi^{+}|}_{AC}  \otimes |0\rangle \langle 0|_{B} + \frac{1}{9}(1 - 8\epsilon) {|\phi^{+}\rangle \langle \phi^{+}|}_{BC} \otimes |0\rangle \langle 0|_{A} \nonumber\\
 \end{align}
where $0 \leq \epsilon \leq 1$ and $|\phi^{+}\rangle = \frac{|00\rangle + |11\rangle}{\sqrt{2}}.$ For this biseparable mixed state, the bipartite reduced states $\rho_{AB}$ and $\rho_{AC}$ are $F_{3}$ steerable if $\epsilon > \frac{9}{4\sqrt{3}} - 1$. Thus, genuine entanglement is not necessary to reveal non-monogamous nature of steering correlations.
%Clearly, our monogamy results are useful for capturing different types of entanglement.

%Again this condition is a simple sufficient condition for genuinely entangled pure states.

\section{Discussions}\label{sec4}
Analysis of shareability of correlations between parties sharing a quantum system is an effective way of interpreting quantum theory. In this paper, we have investigated the shareability properties of quantum steering correlations. For our purpose, we have considered the three settings linear steering ($F_{3}$) inequality. Interestingly it is observed that at most two reduced states of any arbitrary three qubit state can violate the $F_{3}$ inequality. This in turn reveal non monogamous nature of steering correlations. Such an observation is however in contrary to the monogamous nature of steering obtained while using two setting linear steering inequality or Bell-CHSH inequality. This indicates that steering correlations can be non-monogamous depending on the measurement scenario.
Now steering correlations in a setup with two settings per party cannot be shared, whereas the same is possible when a setup with three settings per party is considered. So it might be tempting to think that increase of more settings per party could provide more steerable reduced states. Consequently, it will be interesting if one investigate this shareability phenomenon for more than three settings scenario.\\
We have also addressed the question how different measures of genuine entanglement and also entanglement of reduced states relate with reduced bipartite steering of three qubit states. We have established several relations between reduced bipartite steering and different measures of entanglement. Relation existing between bipartite steering, Bell-CHSH nonlocality, and genuine entanglement for three qubit states has also been analyzed.\\
Then, we have determined the complementarity relation between the local information content and bipartite steering. We believe that this will be helpful in designing some appropriate information-theoretic measures of steering. Moreover, we have shown that the shareability constraints allow us to detect different types of tripartite entanglement. Now, monogamy is the essential part in ensuring the security of quantum cryptographic protocols \cite{Pal}. For this reason, it is beneficial to capture precisely under what condition steering correlations is monogamous. So, our observations may be used in framing some more secured quantum cryptographic protocols.\\	
We hope that our results will be useful for further understanding of formalism underlying steering correlations and  their distribution in multipartite states. Apart from investigating our work for more than three setting scenario, it will be interesting to generalize the shareability concept of steering correlations and relations between different quantum correlations for more than two party reduced states. Also, investigation of the same beyond qubit systems is a source of potential future research.

%\emph{Acknowledgement}:
%We would like to thank Prof. Guruprasad Kar for useful discussions. AM acknowledge support from the CSIR project 09/093(0148)/2012-%EMR-I.
%#######################################	

\onecolumngrid

\newpage
\appendix
\begin{center}
\begin{minipage}[c]{\textwidth}
\section{Reduced Bipartite Steering of Three Qubit States}\label{Classi}
To check the number of reduced steerable states of any pure three qubit state, we consider general Schmidt decomposition (GSD) of three
qubit pure states as follows: \cite{Aac}
\begin{equation}\label{A1}
|\phi\rangle = \lambda_{0} |000\rangle + \lambda_{1} e^{ \imath \phi} |100\rangle + \lambda_{2} |101\rangle + \lambda_{3} |110\rangle +  \lambda_{4} |111\rangle,
 \end{equation}
where $\lambda_{i} \geq 0$, $\sum_{i} \lambda_{i} = 1$ and $\phi$ is a phase between $0$ and $\pi$.
It is direct to derive that \cite{Che}, $\vec{a} = (2\lambda_{0}\lambda_{1}\cos \phi, 2\lambda_{0} \lambda_{1}\sin \phi, 2 \lambda_{0}^{2} - 1)$, $\vec{b} = (2\lambda_{1}\lambda_{3}\cos \phi + 2 \lambda_{2} \lambda_{4}, - 2\lambda_{1}\lambda_{3}\sin \phi, 1 - 2 \lambda_{3}^{2} - 2 \lambda_{4}^{2})$, $\vec{c} = (2\lambda_{1}\lambda_{2}\cos \phi + 2 \lambda_{2} \lambda_{4}, - 2\lambda_{1}\lambda_{2}\sin \phi, 1 - 2 \lambda_{2}^{2} - 2 \lambda_{4}^{2})$. From the formulae of calculating $S_{ij}$ presented in Eqs.(\ref{mon15},\ref{mon16},\ref{mon17}), one can provide the following expressions of $S_{ij}$ for any three qubit state in $|\phi\rangle$ :
\begin{equation}\label{A2}
S_{AB} = 1 + 8 \lambda_{0}^{2} \lambda_{3}^{2} - 4 \lambda_{0}^{2} \lambda_{2}^{2} - 4 \lambda_{1}^{2} \lambda_{4}^{2} - 4 \lambda_{2}^{2} \lambda_{3}^{2} + 8 \lambda_{1} \lambda_{2} \lambda_{3} \lambda_{4} \cos\phi,
\end{equation}

\begin{equation}\label{A3}
S_{AC} = 1 + 8 \lambda_{0}^{2} \lambda_{2}^{2} - 4 \lambda_{0}^{2} \lambda_{3}^{2} - 4 \lambda_{1}^{2} \lambda_{4}^{2} - 4 \lambda_{2}^{2} \lambda_{3}^{2} + 8 \lambda_{1} \lambda_{2} \lambda_{3} \lambda_{4} \cos\phi,
\end{equation}
and
\begin{equation}\label{A4}
S_{BC} = 1 - 4 \lambda_{0} \lambda_{2}^{2} - 4 \lambda_{0}^{2} \lambda_{3}^{2} + 8 \lambda_{1}^{2} \lambda_{4}^{2} + 8 \lambda_{2}^{2} \lambda_{3}^{2} - 16 \lambda_{1} \lambda_{2}\lambda_{3} \lambda_{4} \cos\phi.
\end{equation}
By somewhat tedious but straightforward calculations, we obtain concurrence of each bipartite reduced state \cite{Wot}: $\mathcal{C}_{AB}^{2} = 4 \lambda_{0}^{2} \lambda_{3}^{2}$, $\mathcal{C}_{AC}^{2} = 4 \lambda_{0}^{2} \lambda_{2}^{2}$, and $\mathcal{C}_{BC}^{2} = 4 \lambda_{2}^{2} \lambda_{3}^{2} +  4 \lambda_{1}^{2} \lambda_{4}^{2} - 8 \lambda_{1} \lambda_{2} \lambda_{3} \lambda_{4} \cos \phi$.
In \cite{Sab}, Sab\'{i}n and Garc\'{i}a-Alcaine have proposed a classification of three-qubit states based on the existence of bipartite and tripartite entanglements. Here we investigate the number of reduced steerable states in each of those classes of states. Different types of reduced steering are summarised in Fig.(\ref{monp}).\\
(i)\textbf{Type- 0-0 (Fully separable state)} : A pure state $|\phi\rangle$ is fully separable if it can be written
as $|\phi_{1}\rangle \otimes |\phi_{2}\rangle \otimes |\phi_{3}\rangle$. Clearly all reduced states are separable, thereby implying no $F_{3}$ steerable reduced states. The corresponding steering graph(case (a) in Fig(\ref{monp})) has three vertices without any edge. \\
(ii) \textbf{Subtype- $1^{1} -1$ (Biseparable state) }: any state of this class has one of the following GSD forms:\\
$|\phi_{BS}\rangle =   \lambda_{1} e^{ \imath \phi} |100\rangle + \lambda_{2} |101\rangle + \lambda_{3} |110\rangle +  \lambda_{4} |111\rangle$, where $ \lambda_{1}  \lambda_{4} \neq  \lambda_{2} \lambda_{3}$ and  $\lambda_{1}  \lambda_{4}$ or $ \lambda_{2} \lambda_{3}$ can be zero, if $ \lambda_{1}  \lambda_{4} =  \lambda_{2} \lambda_{3}$ the state is of type $0-0$.\\
$|{\phi'}_{BS}\rangle =   \lambda_{0} |000\rangle + \lambda_{1} e^{ \imath \phi} |100\rangle + \lambda_{2} |101\rangle $,\\
$|{\phi''}_{BS}\rangle =  \lambda_{0} |000\rangle +  \lambda_{1} e^{ \imath \phi} |100\rangle  + \lambda_{3} |110\rangle$ where $\lambda_{1}$ can be zero in the last two cases.\\
In each case, exactly one of the reduced states is $F_{3}$ steerable. For example, $\rho_{BC}$ is the only reduced $F_{3}$ steerable state of $|\phi_{BS}\rangle$. So, any biseparable pure state will obey monogamy of steering. The corresponding steering graph has only one edge connecting two circles (see Fig.\ref{monp}(b)). \\
(iii) \textbf{Subtype-$2-0$ (GHZ like states)}:
This class of states has the form : $|\phi_{GGHZ}\rangle = \alpha |000\rangle + \beta |111\rangle$, where $\alpha^{2} + \beta^{2} = 1.$ It includes the $|\phi_{GHZ}\rangle =\frac{1}{\sqrt{2}}( |000\rangle + |111\rangle)$  state. Entanglement of this class of states cannot be persisted if one of the qubit is traced out. Hence, none of the reduced state can be $F_{3}$ steerable. Three circle without any edge (see Fig.\ref{monp}(c)) corresponds this classes of states. Thus, we see that two types of states (GHZ-like states
and separable states) have the same graph.\\
(iv) \textbf{Subtype- $2-1$ (Extended GHZ states)} : any state of this class is one the following GSD forms:\\
$|\phi_{EGHZ}\rangle = \lambda_{0} |000\rangle + \lambda_{1} e^{ \imath \phi} |100\rangle + \lambda_{4} |111\rangle$,\\
$|{\phi'}_{EGHZ}\rangle = \lambda_{0} |000\rangle + \lambda_{2} |101\rangle + \lambda_{4} |111\rangle$,\\
$|{\phi''}_{EGHZ}\rangle = \lambda_{0} |000\rangle + \lambda_{3} |110\rangle + \lambda_{4} |111\rangle$,
with the three nonzero coefficients  in each case.\\ Any state of this class has only one entangled reduced state. For example, the entangled reduced state $\rho_{AC}$ of $|\phi_{EGHZ} \rangle $ is given by $\rho_{AC} =  |\alpha\rangle \langle \alpha | + \lambda_{4}^{2} |11\rangle \langle 11|$ where $|\alpha \rangle =
\lambda_{0} |00\rangle + \lambda_{2}|11\rangle$, with concurrence $\mathcal{C}_{AC}^{2} = 4 \lambda_{0}^{2} \lambda_{2}^{2}$.  Since $\mathcal{C}_{AB}^{2}$ and $\mathcal{C}_{BC}^{2}$ both are equal to zero, $S_{AC}$ can be obtained straightforwardly from Eq.(\ref{mon99}) and its permutations as $S_{AC} = 1 + 2\, \mathcal{C}_{AC}^{2}.$ Thus, any extended GHZ state has only one reduced $F_{3}$ steerable states and thereby maintaining monogamous nature. Hence, biseparable states and extended GHZ states have the same graph.      \\
(v) \textbf{Subtype- $2-2$ (Star shaped states)}: This class of states takes one of the following GSD forms : \\
$|\phi_{STAR}\rangle = \lambda_{0} |000\rangle + \lambda_{1} e^{ \imath \phi} |100\rangle + \lambda_{2} |101\rangle +  \lambda_{4} |111\rangle$,\\
$|{\phi'}_{STAR}\rangle = \lambda_{0} |000\rangle + \lambda_{1} e^{ \imath \phi} |100\rangle + \lambda_{3} |110\rangle +  \lambda_{4} |111\rangle$ with all coefficients nonzero. This class of states belongs to the class of GHZ \cite{Ple}, since it contains genuine entanglement with $\tau = 4 \lambda_{0}^{2} \lambda_{4}^{2}.$ It is the only class of states among all GHZ class of states that can have two entangled reduced states. We find that for $|{\phi'}_{STAR}\rangle$, $\mathcal{C}_{AC}$ is always zero while $\mathcal{C}_{AB}^{2} (= 4 \lambda_{0}^{2} \lambda_{3}^{2}$) and $\mathcal{C}_{BC}^{2} (= 4 \lambda_{1}^{2} \lambda_{4}^{2}$) are nonzero. Combining these with Eq.(\ref{mon99}) and its permutations, one can find that state belongs to this class will obey non-monogamy
if and only if $4 \lambda_{1}^{2} \lambda_{4}^{2} > 2 \lambda_{0}^{2} \lambda_{2}^{2} > \lambda_{1}^{2} \lambda_{4}^{2}$ holds.

\end{minipage}
\end{center}
One simple example of such state is : $\sqrt{\frac{11}{64}}|000\rangle + \sqrt{\frac{5}{64}} |100\rangle + \frac{1}{2} |110\rangle + \frac{1}{\sqrt{2}} |111\rangle. $ Similarly, one can also find a state from this class which violates the above mentioned inequality : $\sqrt{\frac{3}{32}}|000\rangle + \sqrt{\frac{5}{32}} |100\rangle + \frac{1}{2} |110\rangle + \frac{1}{\sqrt{2}} |111\rangle$.
Thus, this class of states can be both monogamous and non-monogamous. Also it is clear that any state of this class has atleast one steerable reduced state, since $S_{AB} + S_{BC} = 2 + 4 \lambda_{0}^{2} \lambda_{3}^{2} + 4 \lambda_{1}^{2}\lambda_{4}^{2} >2$ for every nonzero value of state parameters. This class of states is represented by Fig.\ref{monp}(b) and \ref{monp}(c).\\
(vi) \textbf{Subtype- $2-3$($W$- like states)}: We now take $W$-like states into account. This class of states is given by the following :
$|\phi_{W}\rangle = \lambda_{0} |000\rangle + \lambda_{1} e^{ \imath \phi} |100\rangle + \lambda_{2} |101\rangle + \lambda_{3} |110\rangle$, where $\lambda_{0}, \lambda_{2},\lambda_{3} >0 $ and $\lambda_{1}^{2} \geq 0$. \\
For $W$-like states, all bipartite entanglements are non-zero, with $\mathcal{C}_{AB}^{2} = 4 \lambda_{0}^{2} \lambda_{3}^{2}$ , $\mathcal{C}_{AC}^{2} = 4 \lambda_{0}^{2} \lambda_{2}^{2}$, and $\mathcal{C}_{BC}^{2} = 4 \lambda_{2}^{2} \lambda_{3}^{2}$. At this point one might wonder whether $W$-class contains states with no reduced steering. Let us consider the $|W\rangle = \frac{1}{\sqrt{3}}(|001\rangle + |010\rangle + |100\rangle)$ state. As shown in sec.(\ref{monsec1}), it has no reduced steering. This is in contrast to GHZ state $|\phi_{GHZ}\rangle$ which are less bipartite entangled but have same steering graph. Let us now come to the question of monogamy(or non-monogamy) for states in the $W$ class. From the criterion presented in corollary (\ref{monc2}), monogamy holds for this class of states if and only if $ H(\lambda_{i}^{2}, \lambda_{j}^{2}) < \lambda_{k}^{2}$, $(i \neq j \neq k, i, j, k = 0,2,3)$ for any two sets of values of $(i, j, k)$,  where $H(\lambda_{i}^{2}, \lambda_{j}^{2})$ denotes the harmonic mean of $\lambda_{i}^{2} $ and $\lambda_{j}^{2}$. This is the the only class of states where one can get all types of steering graphs i.e., no reduce steering states, one reduced steering states, and also two reduced steering states. Examples of one reduced steering state and two reduced steering state are given below : one reduced steering state: $ \frac{1}{\sqrt{6}}( |000\rangle + | 100\rangle + |101\rangle ) + \frac{1}{\sqrt{2}} |110\rangle$, two reduced steering : $|\psi_{ABC}\rangle = \frac{1}{2}(|100\rangle +|010\rangle + \sqrt{2} |001\rangle).$ From the above analysis, it is clear that this class of states can correspond to any steering graph (Fig.\ref{monp}(a)-(c)).  \\

From the above classification, we want to remark that only star shaped states(subtype-$2-2$) and $W$-like(subtype-$2-3$) states can violate monogamy of steering correlations. We believe that our classification of three qubit pure states in terms of reduced steering and monogamy (or non-monogamy) can be useful in many areas of quantum information.

Now we investigate the effect of admixing white noise to those two classes of pure states(star shape states and $W$-like states
) which can exhibit non-monogamous nature of steering correlations. In order to analyse it, we define a critical value $v$  ($0\leq v \leq 1$)
for which the mixed states defined by
\begin{equation}\label{A5}
\rho_{star} = v(|{\phi}'_{star}\rangle \langle {\phi}'_{star} |) + (1-v)\frac{\mathbf{I}}{8}
\end{equation}
and
\begin{equation}\label{A6}
\rho_{W} = v(|\phi_{W}\rangle \langle \phi_{W}|) + (1-v)\frac{\mathbf{I}}{8}
\end{equation}
looses the non-monogamous nature of the original pure states. For a given noisy state, we intend to find the critical value $v_{crit}$ such that if $v>v_{crit}$ the non-monogamy nature is preserved for steering correlations i.e., there exist two steerable reduced states of the given noisy state.

For the $\rho_{star}$ state, one has $S_{AB} = v(1 + 8 \lambda_{0}^{2} \lambda_{2}^{2} - 4 \lambda_{1}^{2} \lambda_{4}^{2})$, $S_{BC} = v(1 + 8 \lambda_{1}^{2} \lambda_{4}^{2} - 4 \lambda_{0}^{2} \lambda_{2}^{2})$, and $S_{AC} = 1 - 4 \lambda_{1}^{2} \lambda_{4}^{2} - 4 \lambda_{0}^{2} \lambda_{2}^{2}$. Consequently, the state $\rho_{star}$ leads to the critical visibility
\begin{equation}\label{A7}
v_{crit} = \max[\frac{1}{1 + 8 \lambda_{0}^{2} \lambda_{2}^{2} - 4 \lambda_{1}^{2} \lambda_{4}^{2}}, \frac{1}{1 + 8 \lambda_{1}^{2} \lambda_{4}^{2} - 4 \lambda_{0}^{2} \lambda_{2}^{2}}].
\end{equation}
Notice that $v_{crit}$ is minimised for $\lambda_{i}^{2} = \frac{1}{4}$ $(i= 0,1, 2, 3)$ which corresponds to the state $|\phi^{star}_{v}\rangle = \frac{1}{2} |000\rangle + \frac{1}{2} |100\rangle + \frac{1}{2} |110\rangle + \frac{1}{2} |111\rangle$ and leads to $v_{crit} = 0.8$. Thus, the state $ |\phi^{star}_{v}\rangle$ is more robust against white noise than any other non-monogamous $|{\phi}'_{star}\rangle$ states.\\
Similarly, one can find
\begin{equation}\label{A8}
v_{crit} = \min[ \max \{w_{1},w_{2}\}, \max\{w_{1}, w_{3}\}, \max\{w_{2},w_{3}\}]
\end{equation}
for $\rho_{W}$ state, where $w_{1} = \frac{1}{1 + 8 \lambda_{0}^{2} \lambda_{3}^{2} - 4 \lambda_{0}^{2} \lambda_{2}^{2} - 4 \lambda_{2}^{2} \lambda_{3}^{2}}$, $w_{2} = \frac{1}{1 + 8 \lambda_{0}^{2} \lambda_{2}^{2} - 4 \lambda_{0}^{2} \lambda_{3}^{2} - 4 \lambda_{2}^{2} \lambda_{3}^{2}}$ and $w_{3} = \frac{1}{1 + 8 \lambda_{2}^{2} \lambda_{3}^{2} - 4 \lambda_{0}^{2} \lambda_{2}^{2} - 4 \lambda_{0}^{2} \lambda_{3}^{2}}$. The most robust non-monogamy property is observed for the state $|\phi^{W}_{v}\rangle = \sqrt{\frac{2}{3}}|000\rangle + \sqrt{\frac{1}{6}} |101\rangle + \sqrt{\frac{1}{6}} |110\rangle$ and the corresponding $v_{crit} = 0.75$. Intuitively it can be expected that
higher entangled states might have greater robustness of non-monogamy compared to the lesser entangled states. Let us take the example of $|\psi_{ABC}\rangle$ state which has $v_{crit} = 0.8$. Now if we compare the efficiency of $|\phi^{W}_{v}\rangle$ with $|\psi_{ABC}\rangle$ we find that the less entangled state $|\phi^{W}_{v}\rangle$ with $E_{W} = \frac{1}{9}$ is more robust in comparison to the higher entangled state $|\psi_{ABC}\rangle$ having $E_{W} = \frac{1}{4}$.
%#######################################

%######################################

%\newpage
%\appendix

\end{document}